\tikzset{
    cross/.pic = {
    \draw[rotate = 45] (-#1,0) -- (#1,0);
    \draw[rotate = 45] (0,-#1) -- (0, #1);
    }
}
\newtheorem{theorem}{Theorem}[section]
\newtheorem{lemma}[theorem]{Lemma}
\newtheorem{definition}[theorem]{Definition}
\newtheorem{prop}[theorem]{Property}
\newenvironment{proof}{{\bf Proof:}}{\hfill\rule{2mm}{2mm}}
\newcommand{\TT}{\mathcal{T}}
\newcommand{\OO}{O}
\newcommand{\GG}{G}
\newcommand{\GAT}{{\widehat{G}}}
\newcommand{\GT}{{\widehat{G}}}
\newcommand{\LCA}{\textsc{Lca}}
\newcommand{\MM}{M}
\newcommand{\NN}{N}
\newcommand{\DIA}{\diamond}
\newcommand{\VV}{V}
\newcommand{\EE}{E}
\newcommand{\VM}{\VV_{\MM}}
\newcommand{\GM}{\GG_{\MM}}
\newcommand{\GN}{\GG_{\NN}}
\newcommand{\PP}{\mathcal{P}}
\newcommand{\DP}{\textsc{Dp}}
\newcommand{\SSR}{\textsc{Ssrp}}
\newcommand{\TL}{\widetilde{O}}
\newcommand{\RR}{\mathcal{R}}
\newcommand{\FF}{F}
\newcommand{\TTT}{\mathcal{T}}
\newcommand{\HH}{H}
\newcommand{\QU}{\textsc{Query}}
\newcommand{\QD}{\textsc{Query-DEP}}
\newcommand{\DO}{\textsc{Do}}
\newcommand{\SDO}{\textsc{Sdo}}
\newcommand{\DEP}{\textsc{Dep}}
\newcommand{\BMM}{\textsc{Bmm}}
\newcommand{\UNIQUE}{\textsc{Unique}}
\newcommand{\Det}{\textsc{Detour}}
\newcommand{\SPT}{\textsc{Spt}}
\newcommand{\SM}{\setminus}
\newcommand{\BFS}{\SPT}
\title{Near Optimal Algorithm for Fault Tolerant Distance Oracle and Single Source Replacement Path problem 
}
\author{
Dipan Dey \\
  IIT Gandhinagar \\
  Gandhinagar \\
  India\\
  \texttt{dey\_dipan@iitgn.ac.in} \\
  \and
  Manoj Gupta \\
  IIT Gandhinagar \\
  Gandhinagar \\
  India\\
  \texttt{gmanoj@iitgn.ac.in} \\
}
\begin{document}

\maketitle

\begin{abstract}
In a graph $\GG$ with a source $s$, we design a distance oracle that can answer the following query: $\QU(s,t,e)$ -- find the length of shortest path from a fixed source $s$ to any destination vertex $t$ while avoiding any edge $e$. We design a deterministic algorithm that builds such an oracle in $\TL(m\sqrt n)$ time\footnote{$\TL()$ hides poly$\log n$ factor }. Our oracle uses $\TL(n\sqrt n)$ space and can answer queries in $\TL(1)$ time. Our oracle is an improvement of the work of Bil\`{o} et al. (ESA 2021) in the preprocessing time, which constructs the first deterministic oracle for this problem in $\TL(m\sqrt n+n^2)$ time.

Using our distance oracle, we also solve the {\em single source replacement path problem} ($\SSR$ problem).
Chechik and Cohen (SODA 2019) designed a randomized combinatorial algorithm to solve the $\SSR$ problem. The running time of their algorithm is $\TL(m\sqrt n + n^2)$. In this paper, we show that the $\SSR$ problem can be solved in $\TL(m\sqrt n + |\RR|)$ time, where $\RR$ is the output set of the $\SSR$ problem in $\GG$. 
Our  $\SSR$ algorithm is optimal (upto polylogarithmic factor) as there is a conditional lower bound of $\Omega(m\sqrt n)$ for any combinatorial algorithm that solves this problem. 
\end{abstract}

\section{Introduction}
Real-life graph networks are prone to failures, e.g., nodes or links can fail. Thus, algorithms developed for these networks must be resilient to failures. For example, there may be some edges or links which are not working in the network and we want to avoid them. In this paper, we present an algorithm to create an oracle for the single source shortest path problem in a fault-prone graph. Such algorithms are also called fault-tolerant algorithms.

Consider an undirected and unweighted graph $\GG$ with a source $s$. We want to build an oracle that can find the length of shortest path from $s$ to any other vertex in the presence of faulty edges -- such an oracle is also called a {\em fault-tolerant distance oracle}. Formally,

\begin{definition}
        A fault-tolerant distance oracle answers the following query in a graph $\GG$:

        \begin{center}
                $\QU(s,t,\FF)$: Find the length of  shortest path from  $s$ to $t$ avoiding the set $\FF$ of edges.
        \end{center}

\end{definition}

The time it takes to answer a query is called the {\em query time}. If the query is always from a fixed source $s$ and $|\FF|\le f$, then the distance oracle is called a $f$-edge fault tolerant single source distance oracle, or $\SDO(f)$ in short. If all vertices can be sources, the oracle is called $f$-edge fault tolerant distance oracle, or $\DO(f)$. We list some results related to distance oracles:

Demetrescu et al. \cite{Demetrescu2008} designed a $\DO(1)$ with  $\TL(n^2)$ space and $O(1)$ query time. Bernstein and Karger\cite{Bernstein2009} showed that this oracle can be built in $\TL(mn)$ time.   Pettie and Duan \cite{DuanP10} extended the result of Demestrescu et al. to two faults. They designed a $\DO(2)$ with $\TL(n^2)$ space and $\TL(1)$ query time. Gupta and Singh \cite{GuptaS18} designed a $\SDO(1)$ with $\TL(n\sqrt n)$ space and $\TL(1)$ query time.  Recently  Bil{\`{o}} et al. \cite{BiloCFS21} built the $\SDO(1)$ (described in \cite{GuptaS18}) in  $\TL(m\sqrt n + n^2)$ time.
Many different aspects of distance oracles  have been studied in literature  \cite{Bernstein2008,Bernstein2009,Bilo2016,ChechikCFK17,Chechik2010,10.5555/1496770.1496826,10.1145/2438645.2438646}.

In this paper, we will focus our attention on building $\SDO(1)$. Due to Bil{\`{o}} et al. \cite{BiloCFS21}, the time to build $\SDO(1)$ is $\TL( m\sqrt n + n^2)$. Chechik and Cohen \cite{ChechikC19} showed that, the first term in this running time is a conditional lower bound for $\SSR$ problem. But it is not clear if the second term is necessary. 
In this paper, we build a $\SDO(1)$ in $\TL(m\sqrt n)$ time -- this preprocessing algorithm has a better runtime than \cite{BiloCFS21} for sparse graphs, which is state of the art for this problem till now. Using our $\SDO(1)$ data structure, we are able to reduce the runtime of the algorithm solving $\SSR$ problem too. Our distance oracle is quite different from the distance oracle of Gupta and Singh \cite{GuptaS18} -- though we use the main technical idea of \cite{GuptaS18} crucially in our paper too. The construction of this new oracle is the main technical result of this paper. 

\begin{theorem}
        \label{thm:sdo}
        For undirected, unweighted graphs there is a deterministic algorithm that can build a $\SDO(1)$ of size $\TL(n\sqrt n)$ and query time $\TL(1)$ in $\TL(m\sqrt n)$ time.
\end{theorem}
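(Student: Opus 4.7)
The plan is to reuse the space-efficient oracle architecture of Gupta--Singh \cite{GuptaS18} and focus all the new work on the preprocessing, whose current $\TL(m\sqrt{n} + n^2)$ bound in \cite{BiloCFS21} we wish to reduce to $\TL(m\sqrt{n})$. First, compute a BFS tree $T$ rooted at $s$ in $O(m)$ time, together with depths $d_T(\cdot)$ and the unique tree path $\pi_T(s,t)$ for every $t$. For any query $\QU(s,t,e)$, the answer is simply $d_T(s,t)$ when $e \notin \pi_T(s,t)$; so the only nontrivial case is a failure $e$ lying on the $s$-to-$t$ path in $T$, and it is this case that the oracle must store compactly.

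Next, set a threshold $\tau = \sqrt{n}$ and declare an edge $(\text{parent}(v), v)$ of $T$ to be a \emph{pivot} whenever $d_T(v)$ is a multiple of $\tau$. There are at most $n/\tau = \sqrt{n}$ pivots. For each pivot $e_i$, run a BFS from $s$ in $G \setminus \{e_i\}$ in $O(m)$ time, yielding distance arrays of size $O(n)$ per pivot, hence $\TL(m\sqrt{n})$ time and $\TL(n\sqrt{n})$ space in total. The structural claim I would prove, mirroring the detour characterization of \cite{GuptaS18}, is this: for any pair $(t,e)$ with $e \in \pi_T(s,t)$, let $e_i$ be the closest pivot to $e$ on $\pi_T(s,t)$ that lies strictly between $e$ and $t$. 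Then an optimal replacement path either agrees, in terms of distance, with the path stored in the $e_i$-avoiding BFS array, or it is realized entirely by a ``detour'' whose departure and arrival vertices both live in the $\tau$-deep band $B_e$ of $T$ between $e$ and $e_i$.

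For the first case the query is answered in $\TL(1)$ by a direct lookup into the pivot arrays. The second case is the genuine obstacle, because there are $\Theta(n)$ vertices $t$ but only $\TL(n\sqrt{n})$ space budget across all bands, and naively constructing a local replacement-path oracle per band costs $\Omega(n)$ per band and reintroduces the $n^2$ term. My plan to handle it is to apply a heavy-path decomposition inside each band and use the pivot BFS distances as fixed ``boundary values'': this reduces in-band replacement-path computation to an amortized $\TL(m_B)$ per band (where $m_B$ counts edges incident to the band), which sums to $\TL(m \sqrt{n})$ since each edge lies in $O(1)$ bands. The indexing along heavy paths packs the per-$(t,e)$ answer into $\TL(n\sqrt{n})$ total space and allows $\TL(1)$ lookup.

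The hardest step will be the structural lemma itself: proving that an optimal replacement path for a non-pivotal failure $e$ can always be witnessed by either (i) the precomputed pivot-avoiding BFS distance or (ii) a detour confined to the current band. Once this dichotomy is established, the rest is essentially bookkeeping. A secondary obstacle is determinism, since the BFS subroutines in \cite{BiloCFS21} already are deterministic but the band-level oracle must avoid any hashing or randomization; this should follow because heavy-path decomposition and per-band BFS are fully deterministic.
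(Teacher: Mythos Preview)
Your proposal has a concrete gap at the very first counting step. You declare an edge $(\mathrm{parent}(v),v)$ of the BFS tree $T$ to be a pivot whenever $d_T(v)$ is a multiple of $\tau=\sqrt n$, and then assert that ``there are at most $n/\tau=\sqrt n$ pivots.'' That bound holds when $T$ is a single path, but $T$ is a \emph{tree}: the number of vertices whose depth is a multiple of $\sqrt n$ can be $\Theta(n)$. For a simple witness, take a path of length $\sqrt n$ from $s$ and attach $n-\sqrt n$ leaves at its bottom endpoint; every one of those $n-\sqrt n$ leaf edges becomes a pivot. Running a BFS in $G\setminus\{e_i\}$ for each pivot then costs $\Theta(nm)$ time and $\Theta(n^2)$ space, which is exactly the $n^2$ term you set out to remove. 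The depth-sampling idea is borrowed from the single $s$--$t$ replacement path setting, where pivots live on one path; it does not transfer to the single-source, all-destinations setting without a mechanism to control branching.

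Even setting the pivot count aside, the structural dichotomy you state is not yet a lemma. You want: for $e\in\pi_T(s,t)$ and $e_i$ the nearest pivot strictly between $e$ and $t$, either $|st\diamond e|$ equals the $e_i$-avoiding BFS distance, or the detour of $st\diamond e$ is confined to the band $B_e$. But the first alternative says that avoiding $e_i$ and avoiding $e$ give the same $s$--$t$ distance, which is false in general; what one can hope for is that a \emph{long} detour (one that reaches below $e_i$) also avoids $e_i$, hence $|st\diamond e|\ge |st\diamond e_i|$, while the reverse inequality need not hold. The in-band case is then not a mere bookkeeping step: you must actually compute, for every $t$ and every non-pivot $e$ on $\pi_T(s,t)$, the best short-detour replacement, and ``heavy-path decomposition inside each band with pivot BFS as boundary values'' is not an algorithm but a hope.

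For contrast, the paper does not sample by depth at all. It recursively applies a tree \emph{separator} (a single vertex $r$ splitting $\SPT(s)$ into two balanced subtrees), takes the $sr$ path as the current ``primary path,'' and classifies replacement paths for failures on it as \emph{jumping} (re-merging with $sr$, handled by a single $s$--$r$ replacement-path computation plus $|rt|$) or \emph{departing} (never re-merging). The crucial counting lemma is that each $t$ admits only $O(\sqrt n)$ departing candidates, which is exactly Gupta--Singh's Lemma~11 rephrased for the separator setting; these are found by one Dijkstra-style sweep in $O(m_{\widehat G}\sqrt n)$ time. Recursion into the two subtrees (augmented with $O(n)$ weighted shortcut edges) has $O(\log n)$ depth, giving the $\TL(m\sqrt n)$ total. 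The separator, not depth sampling, is what tames the branching of $T$.
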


\subsection{Application: Single Source Replacement Path Problem}
Let us first look at the  {\em replacement path problem}. In this problem, we are given a source $s$ and a destination $t$. We assume that there is a unique shortest path from $s$ to $t$, denoted by $st$. 
\begin{definition}
	(Replacement Path Problem) Let $s$ be a source and vertex  $t$ be the destination in $\GG$.
	For each $e \in st$ path, output the length of the shortest path from $s$ to $t$ avoiding $e$. 
\end{definition} 

The replacement path problem was first investigated due to its relation with auction theory \cite{Hershberger2001, NisanR01} and has been studied extensively. For an undirected graph with  non-negative edge weights, the replacement path problem can be solved in  $\TL(m+n)$ time\cite{MalikMG89,Hershberger2001,NardelliPW03}.
We look at the generalization of the replacement path problem -- the single source replacement path problem.
\begin{definition} ($\SSR$ problem)
	Let $s$ be a source in a graph $\GG$ which is undirected and unweighted. For each vertex $t \in \GG$ and  each $e \in st$ path, output the length of the shortest path from $s$ to $t$ avoiding $e$.
\end{definition}
Chechik and Cohen \cite{ChechikC19} designed a randomized combinatorial algorithm that solves the $\SSR$ problem in $\TL(m\sqrt n +n^2)$ time. They also showed a matching conditional lower bound via Boolean Matrix Multiplication. 
\begin{lemma} \cite{ChechikC19}
        \label{lem:lowerbound}
        Let $\BMM(n,n)$ be the time taken to multiply two $n \times n$ boolean matrices with a total of $m$ ones. Under the assumption that any combinatorial algorithm for $\BMM(n,n)$  requires $mn^{1-o(1)}$ time\footnote{In a RAM\ model with words of $O(\log n)$ bits.}, any combinatorial algorithm for $\SSR$ problem requires $\Omega(m\sqrt n)$ time.
\end{lemma}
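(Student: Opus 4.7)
The plan is to prove Lemma~\ref{lem:lowerbound} by reducing Boolean Matrix Multiplication to the $\SSR$ problem. Given two $n\times n$ Boolean matrices $A$ and $B$ with a total of $m$ ones, I would build an unweighted, undirected graph $\GG$ with $\Theta(n)$ vertices and $\Theta(m)$ edges, designate a source $s$, and arrange the construction so that the full output of $\SSR$ on $(\GG, s)$ determines every entry of the product $C=AB$.

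First I would fix the vertex set: the source $s$, a ``row-label'' set $U=\{u_1,\dots,u_n\}$, a ``middle'' set $V=\{v_1,\dots,v_n\}$, a ``column-label'' set $W=\{w_1,\dots,w_n\}$, and an auxiliary ``tail'' gadget attaching $s$ to every $u_i$. Into the core I would place the edge $u_iv_k$ iff $A[i,k]=1$ and the edge $v_kw_j$ iff $B[k,j]=1$, giving exactly $m$ core edges. The tail gadget is the delicate piece: it must route the unique shortest $s$-to-$w_j$ path through a specific $u_i$ via a dedicated tail edge $e_i$, so that removing $e_i$ forces the path to detour through a different $u_{i'}$ while leaving the final two hops $u_iv_k w_j$ unusable.

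Next I would argue that the $\SSR$ answers encode $C$. For a fixed $j$, the shortest $s$-to-$w_j$ distance avoiding $e_i$ takes one value when some $k$ satisfies $A[i,k]=B[k,j]=1$ (so $C[i,j]=1$) and a strictly different value when no such $k$ exists (so $C[i,j]=0$). Scanning the $\SSR$ output over all $(i,j)$ therefore recovers $C$ in $O(n^2)$ additional time. Hence any combinatorial algorithm solving $\SSR$ in $T(m,n)$ time yields a combinatorial $\BMM$ algorithm on instances with $m$ ones running in $T(\Theta(m),\Theta(n))+O(n^2)$ time. If $T(m,n)=o(m\sqrt n)$, then for $m\ge n^{3/2}$ (where the $n^2$ term is dominated by $m\sqrt n$) the total time is $o(mn)$, contradicting the hypothesis that $\BMM(n,n)$ requires $mn^{1-o(1)}$ time; the regime $m<n^{3/2}$ is already covered since there $m\sqrt n \le n^2$ and the trivial bound suffices.

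I expect the main obstacle to be the design and analysis of the tail gadget: one has to guarantee (i) uniqueness of shortest $s$-to-$w_j$ paths so that the $\SSR$ output is well-defined, (ii) that failure of a single tail edge $e_i$ isolates exactly the pair $(i,j)$ without contaminating other rows, and (iii) that the gap in replacement distance cleanly distinguishes $C[i,j]=0$ from $C[i,j]=1$. This combinatorial balancing of path lengths — against the three-hop ``core'' $u_i$--$v_k$--$w_j$ — is the technical heart of the Chechik--Cohen reduction, and verifying it is where the bulk of the argument lives.
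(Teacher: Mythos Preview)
First, note that the paper does not prove this lemma; it is simply quoted from \cite{ChechikC19}, so there is no in-paper argument to compare against. What follows is therefore a critique of your reconstruction.

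Your reduction has a genuine parameter mismatch that makes it prove too much. You build, from $n\times n$ matrices with $m$ ones, an $\SSR$ instance on $\Theta(n)$ vertices and $\Theta(m)$ edges. If that were correct, then the $\TL(M\sqrt{N})$ algorithm of this very paper (on an $M$-edge, $N$-vertex graph) would solve dense $\BMM(n,n)$ in $\TL(n^{2.5})$ time, refuting the hypothesis you are assuming. The concrete flaw is the tail gadget: you attach $s$ to each $u_i$ by a single dedicated edge $e_i$, so every shortest $s$--$w_j$ path has length $O(1)$ and therefore contains at most one of the $e_i$. Consequently $\SSR$ produces only $O(n)$ replacement distances in total, not the $n^2$ needed to read off all entries $C[i,j]$; your step ``scanning the $\SSR$ output over all $(i,j)$'' is simply not available. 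To encode all of $C$, each shortest $s$--$w_j$ path must traverse $\Theta(n)$ failable edges, forcing path lengths of order $n$. The Chechik--Cohen construction achieves this by taking $\sqrt{n}\times\sqrt{n}$ matrices and building a graph on $\Theta(n)$ vertices whose shortest paths run along a long spine; then the $\BMM$ hypothesis (requiring $m\cdot(\sqrt{n})^{1-o(1)}=m\cdot n^{1/2-o(1)}$ time) lines up exactly with the claimed $\Omega(m\sqrt{n})$ bound for $\SSR$. Your outline is on the right track conceptually, but the dimensions must be rescaled and the gadget redesigned so that every row index corresponds to an edge lying on every source--target shortest path.
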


It may seem that the algorithm of Chechik and Cohen \cite{ChechikC19} is nearly optimal. It is indeed the case if the output size is $O(n^2)$. However, for a low-diameter graph, this extra additive factor seems unnecessary. If the graph is dense ($m \ge n^{3/2}$), then the $n^2$ factor is subsumed by the first term $m\sqrt n$. Thus, when $m < n^{3/2}$ and the graph has a low diameter, can we improve the running time of the $\SSR$ problem? For such a graph, the algorithm of Chechik and Cohen \cite{ChechikC19} is not optimal.
Similar to \cite{ChechikC19}, Gupta et al. \cite{GuptaJM20} also designed an algorithm for the $\SSR$ problem. 
Even this algorithm has the running time $\TL(m\sqrt n+n^2)$ -- though it uses an entirely different approach compared to \cite{ChechikC19}.
Thus, the main question is: {\em can we remove this extra additive factor of $n^2$ from the running time of the $\SSR$ problem?} In this paper, we design such an algorithm:

\begin{theorem}
        \label{thm:main}
        There is a deterministic algorithm for $\SSR$ problem with a running time of $\TL(m\sqrt n+ |\RR|)$ where $|\RR|$ is the output size of $\SSR$ problem in $\GG$.
\end{theorem}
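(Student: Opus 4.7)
The plan is to use the single source fault tolerant distance oracle from Theorem~\ref{thm:sdo} as a black-box subroutine and then answer one oracle query per element of the output set $\RR$. Since the oracle supports $\TL(1)$ query time, this reduces the $\SSR$ problem to its preprocessing cost plus $|\RR|$ queries, which is exactly the bound we are after.

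Concretely, I would first invoke Theorem~\ref{thm:sdo} to build a deterministic $\SDO(1)$ in $\TL(m\sqrt n)$ time, and run a BFS from $s$ in $O(m+n)$ time to obtain a shortest path tree $T$; since $\GG$ is unweighted, $T$ fixes a canonical shortest path $st$ for every vertex $t$ (any consistent tie-breaking will do). I would then iterate over every vertex $t \in \VV$, walk the path from $t$ up to $s$ in $T$ using parent pointers, and for each edge $e$ on this walk issue the oracle query $\QU(s,t,\{e\})$, reporting the returned value as the replacement distance for the pair $(t,e)$.

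For the analysis, the total number of $(t,e)$ pairs processed equals $\sum_{t\in V} \mathrm{depth}_T(t) = \sum_{t \in V} d_{\GG}(s,t)$, which is precisely $|\RR|$ by the definition of the output of the $\SSR$ problem. Each such query costs $\TL(1)$ by Theorem~\ref{thm:sdo}, so the query phase incurs $\TL(|\RR|)$ time in total. Combined with the $\TL(m\sqrt n)$ preprocessing cost of the oracle and the $O(m+n)$ cost of BFS, the overall runtime is $\TL(m\sqrt n + |\RR|)$. The algorithm is deterministic since both the oracle construction of Theorem~\ref{thm:sdo} and BFS are deterministic.

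The main technical difficulty of the theorem is absorbed entirely into Theorem~\ref{thm:sdo}; the reduction itself is essentially a loop over the tree paths of a BFS tree. The only bookkeeping point worth noting is that the tree walks introduce no overhead beyond the $O(1)$ work charged to each output tuple, which is immediate from the parent-pointer representation of $T$. In particular, we do not attempt to batch or reuse queries across different destinations $t$ that share a prefix of their $s$-to-$t$ paths: every output tuple demands its own oracle call because the replacement distance depends on both $t$ and $e$, so the $\TL(|\RR|)$ factor is inherent to the output rather than an artifact of the algorithm.
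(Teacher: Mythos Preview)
Your proposal is correct and follows essentially the same approach as the paper: the paper packages the reduction as Lemma~\ref{lem:reduction} (build the $\SDO(1)$ of Theorem~\ref{thm:sdo}, then issue one $\QU(s,t,e)$ per output tuple) and observes that together with Theorem~\ref{thm:sdo} this immediately yields Theorem~\ref{thm:main}. Your write-up simply spells out the BFS-tree walk and the $\sum_t \mathrm{depth}_T(t)=|\RR|$ bookkeeping that the paper leaves implicit.
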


In the above theorem, $|\RR|$ is the output size, thus an implicit lower bound on the $\SSR$ problem. Using  \Cref{lem:lowerbound}, we conclude that our algorithm is nearly optimal up to a polylogarithmic factor. 

To build an algorithm for the $\SSR$ problem, we first build a $\SDO(1)$. Then, for each $t \in \GG$ and each $e \in st$ path, we  call $\QU(s,t,e)$ and output the answer. Thus, we claim the following lemma:

\begin{lemma}
        \label{lem:reduction}
        If we can build a $\SDO(1)$ with query time $q$ in time $T$, then there is an algorithm for $\SSR$ problem with a running time $O(T+q|\RR|)$, where $|\RR|$ is the output size in $\GG$.
\end{lemma}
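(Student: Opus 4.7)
The plan is to observe that this is a direct output-sensitive reduction: every answer the $\SSR$ problem is required to produce can be obtained by a single call to the fault-tolerant oracle. First, invoke the hypothesized construction and build a $\SDO(1)$ in time $T$. In the same preprocessing phase, compute a BFS tree $\TTT$ rooted at $s$ in $O(m+n)$ time and fix $\TTT$ as the canonical shortest-path tree defining the path $st$ for every destination $t$; this cost is absorbed in $T$, and it is in any case natural since a reasonable construction of the oracle already relies on such a tree.

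Next, I would traverse $\TTT$ and, for each vertex $t\in\VV(\GG)$ and each edge $e$ on the unique $s$-to-$t$ path in $\TTT$, call $\QU(s,t,e)$ and emit the returned value. The number of pairs $(t,e)$ enumerated in this way is
\[
 \sum_{t\in \VV(\GG)} |\,\text{edges of the }st\text{ path in }\TTT\,|,
\]
which by definition of the $\SSR$ output equals $|\RR|$. Enumerating these pairs from $\TTT$ takes $O(|\RR|)$ time (walk $\TTT$ and, at each vertex, list the edges from the root), and each of the $|\RR|$ queries costs $q$, giving a query phase of $O(q|\RR|)$. Adding the preprocessing cost $T$ yields the claimed running time $O(T+q|\RR|)$.

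The only point needing care — and the nearest thing to an obstacle — is consistency between the $st$ path used to enumerate failing edges and the $st$ path implicitly assumed by the oracle when interpreting $\QU(s,t,e)$. Fixing a single tree $\TTT$ up front and using it both to define the outputs of the $\SSR$ problem and to generate the query list removes this ambiguity, so the reduction goes through without further work.
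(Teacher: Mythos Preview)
Your argument is correct and matches the paper's own reasoning exactly: the paper states (immediately before the lemma) that one builds the $\SDO(1)$ and then, for each $t\in\GG$ and each $e$ on the $st$ path, calls $\QU(s,t,e)$ and outputs the answer. Your additional remarks about fixing a single BFS tree $\TTT$ for consistency are sound elaborations, but the underlying approach is identical.
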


The above lemma, along with \Cref{thm:sdo} implies \Cref{thm:main}. 

\subsection{Related Work}

Other related problems include the fault-tolerant subgraph problem.
In this problem, we want to find a subgraph of $\GG$ such that the shortest path from
$s$ is preserved in the subgraph after any edge deletion.
Parter and Peleg \cite{Parter2013} designed an algorithm to compute a single fault-tolerant subgraph with
$ O(n^{3/2})$ edges. They also showed that their result could be easily extended to multiple sources. This result was later extended
to dual fault by Parter [16] with $ O(n^{5/3}) $ edges. Gupta and Khan \cite{GuptaK17} extended
the above result to multiple sources.
All the above results are optimal due to a result by Parter \cite{Parter2015} which states that a multiple source $f$-fault tolerant  subgraph requires $\Omega\Big(n^{2-\frac{1}{f+1}}\Big)$
edges.  Bodwin et al. \cite{BodwinGPW17} showed the existence of a $f$-fault tolerant
subgraph of size $O\Big( fn^{2-\frac{1}{2^f}} \Big)$.

\section{Preliminaries}
Let $\GG(\VV,\EE)$ be an undirected unweighted graph with a source $s$. 
Given two vertices $u$ and $v$ in a graph $H$, unless otherwise stated,  $(uv)_H$ denotes the shortest path from $u$ to $v$ in $H$. If $H = \GG$, we will remove the subscript and the brackets -- we will apply this policy for all the notations below.  $|uv|_H$ denotes the length of the shortest path in $H$. Some of our graphs will be weighted, even though $\GG$ is unweighted. If $H$ is weighted, then we will abuse notation and use  $|uv|_H$ to denote the weight of the shortest path from $u$ to $v$ in $H$.
 The edges and vertices of $H$ will be denoted by $\EE_H$ and $\VV_H$, respectively.\ Additionally, $m_H$ and $n_H$ will denote the number of edges and vertices in $H$, respectively. 
 $\SPT_H(s)$ denotes the shortest path tree from $s$ in $H$. We can view the $\SPT_H(s)$ to be drawn from top to bottom with the top vertex being $s$. For any two vertices $u,v$ on the $st$ path (the path between $s$ and $t$) in $\SPT_H(s)$, we say that   $u$ is before / above $v$ if  $|su|_H <|sv|_H$. Similarly, we say that  $u$ is after/below $v$ if $|su|_H >|sv|_H$.
   For an edge $e$ in a  weighted graph $H$, $wt_H(e)$ will denote the weight of $e$.
Given two paths $(uv)_H$ and $(vw)_H$, the path $(uv)_H +(vw)_H$ denotes their concatenation.
$P[u,v]$ denotes a contiguous subpath of $P$  starting at $u$ and ending at $v$.
Sometimes, we may also write {\em the interval $[u,v]$ of $P$} to denote   $P[u,v]$. We say {\em $u$ comes before $v$} on a path $R$ starting from $s$, if $|R[s,u]| < |R[s,v]|$. Similarly, we can define the term {\em $u$ comes after $v$} on path $R$.  

A replacement path $R$ is the shortest path from $s$ to $t$ avoiding an edge $e$ on $st$ path. There can be many replacement paths of the same length avoiding $e$. 
To ensure uniqueness, we  will use the following definition of  replacement path\footnote{This was referred to as {\em preferred replacement path} in \cite{GuptaK17}.}.

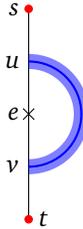
\begin{figure}[hpt!]

\centering
\begin{tikzpicture}[scale=0.7]
\coordinate (s) at (0,8);
\coordinate (t) at (0,4);
\coordinate (u) at (0,7);
\coordinate (v) at (0,5);
\coordinate (e) at (0,6);

\draw (s) node[left] () {$s$} -- (t) node[right] () {$t$};
\filldraw [red] (s) circle (2pt);
\filldraw [red] (t) circle (2pt);
\draw  (u) arc(90:-90:1) [ thick,blue];
\draw  (u) arc(90:-90:1) [ line width=2mm,blue,opacity=0.5];
\draw (e) node[left] () {$e$} pic[rotate = 0] {cross=3pt};
\draw (u) node [left] {$u$};
\draw (v) node[left] {$v$};
\end{tikzpicture}
\caption{Replacement path with detour $uv$ and detour point $u$.}
\end{figure}

\begin{definition} \label{def:replacementpath} ( Replacement Path) A path $R$ from $s$ to $t$ avoiding $e$ is called a  replacement path if (1) it diverges from and merges to the $st$ path just once (2) its divergence point from the $st$ path is as close to $s$ as possible. (3) it is the {\bf lexicographically smallest}\footnote{Let $P$ and $P'$ first diverge from each other to $x \in P$ and $x' \in P'$ respectively. If the index of $x$ is lower than  $x'$, then $P$ is said to be lexicographically smaller than $P'$.}  shortest path in $\GG$ satisfying (1) and (2). 

\end{definition}

We now define some terms related to  replacement paths.
$(st \DIA  e)_H$ denotes the  replacement path from $s$ to $t$ avoiding the edge $e$ in $H$.  We can generalize this notation to a  replacement path that avoids a set of edges. Thus, $(st \DIA \FF)_H$ denotes the replacement path from $s$ to $t$ in $H$ avoiding a set $\FF$ of edges. 
In our algorithm, after we find the replacement path $(st \DIA e)_H$, we will store its length in $d_H(s,t,e)$.   Sometimes, we also want to store the length of $(st \DIA \FF)_H$. In that case, we will store it in $d_H(s,t,\FF)$. 
     
     \begin{definition}
     \label{def:detour}
     	(Detour and Detour point of a  replacement path) \cite{ChechikM20}
  Let $R = st \DIA e$. Then, the detour of $R$ is $R \setminus st$. That is, let us assume that $R$  leaves $st$ above $e$ at a vertex $u$, and merges back on $st$ at vertex $v$ after $e$, then detour of $R$ is $R[u,v]$. Also, the vertex at which the detour starts is called the detour point of $R$. So, $u$ is the detour point of $R$ or in short $\DP(R)=u$.
     \end{definition}

Lastly, in our algorithm, we will need to find least common ancestor of any  two vertices $u$ and $v$ in $\SPT_H(s)$. Let $\LCA_H(u,v)$ denotes the least common ancestor of $u$ and $v$ in $\SPT_H(s)$.
    To find the $\LCA$, we will use the following result:

    \begin{lemma} (See \cite{BenderF00} and its references)
\label{lem:lca}
    Given a tree $T$ on $n$ vertices, we can build a data structure of size $O(n)$ in $O(n)$ time such that the least common ancestor query can be answered in $O(1)$ time.
    \end{lemma}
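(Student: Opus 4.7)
The plan is to reduce LCA on $T$ to a range minimum query (RMQ) instance via an Euler tour, and then solve that RMQ instance with $O(n)$ preprocessing and $O(1)$ query time by exploiting the fact that consecutive entries of the resulting array differ by exactly $\pm 1$.

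First, I would run a DFS from the root of $T$, appending the current vertex to a list every time it is visited (both on descent and on return to the parent). This yields an Euler tour of length $2n-1$. Alongside the tour I would build a depth array $D$ indexed by tour positions, and for each vertex $v$ store the first-occurrence index $\mathrm{first}(v)$. A classical observation is that $\LCA_T(u,v)$ is precisely the vertex that attains the minimum value in the contiguous subarray of $D$ between positions $\mathrm{first}(u)$ and $\mathrm{first}(v)$. Thus each LCA query reduces to a single RMQ query on $D$.

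The next step is to solve RMQ on $D$ in $O(n)$ preprocessing and $O(1)$ time per query. Because consecutive entries of $D$ differ by exactly $\pm 1$ (each DFS step either descends to a child or returns to a parent), I can apply the Bender--Farach-Colton block decomposition: partition $D$ into blocks of length $b=\tfrac{1}{2}\log n$. Record the minimum of each block in a summary array of length $O(n/\log n)$, and build a sparse table on top of this summary; because its length is $O(n/\log n)$, the sparse table uses $O(n)$ space and $O(n)$ preprocessing. For intra-block queries, normalize each block by subtracting its starting value; the remaining $\pm 1$ pattern has only $2^{b-1}=O(\sqrt n)$ possibilities, so I can precompute the answer for every pattern and every pair of endpoints inside a block in $o(n)$ time and space. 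A general RMQ query is then answered by taking the minimum of at most three values: one intra-block query in the block containing the left endpoint, one intra-block query in the block containing the right endpoint, and one sparse-table lookup spanning the intervening blocks, all in $O(1)$ time.

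The main obstacle is the arithmetic that keeps the preprocessing linear rather than $O(n\log n)$; the $\pm 1$ restriction is what buys this, and one must choose $b=\tfrac{1}{2}\log n$ exactly so that both $O(n/\log n)\cdot\log(n/\log n)$ for the summary sparse table and $2^{b-1}\cdot b^2$ for the pattern table are sublinear. Everything else (Euler tour, depth and first-occurrence arrays, and combining intra- and inter-block answers at query time) is routine linear-time bookkeeping, so the whole construction fits into $O(n)$ space and $O(n)$ preprocessing with $O(1)$ query time.
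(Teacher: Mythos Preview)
Your proposal is correct and is precisely the Bender--Farach-Colton reduction that the cited reference \cite{BenderF00} gives; the paper itself does not supply a proof of this lemma but simply quotes it as a known result, so there is nothing further to compare.
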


\section{Overview of our algorithm to build SDO(1)}


We will use divide and conquer approach to build $\SDO(1)$. This strategy has been previously used for directed graphs in \cite{GrandoniW12,ChechikM20}. However, simply using this strategy will not get us  close to our desired bound of
$\TL(m\sqrt n)$. For that, we need to combine this divide and conquer strategy with an idea of Gupta and Singh \cite{GuptaS18}. This combination is one of the technical contributions of the paper.

Like \cite{GrandoniW12,ChechikM20}, we use the following   separator lemma to divide the graph $\GG$.
\begin{theorem}(Separator Lemma \cite{GrandoniW12,ChechikM20})
\label{lem:separator}
Given a tree $T$  with  $n$ nodes rooted at a source  $s$, one can find in $O(n)$ time a vertex $r$ that separates the tree  $T$ into  edge disjoint sub-trees $\MM,\NN$ such that  $\EE_{\MM}\cup \EE_{\NN}=\EE_{T}, \VV_{M} \cap \VV_{N}=\{r\}$  and $\frac{n}{3 } \leq | \VV_{\MM}|, |\VV_{\NN}| \leq \frac{2n}{3}$.
  \end{theorem}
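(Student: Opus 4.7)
The plan is to root $T$ at $s$, compute every subtree size with a single DFS in $O(n)$ time, walk down from $s$ to an appropriate vertex $r$, and then distribute the pieces around $r$ between $\MM$ and $\NN$. Write $\mathrm{sz}(v)$ for the number of vertices in the subtree of $T$ rooted at $v$; a standard post-order DFS computes all of these values in $O(n)$ time.

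To choose $r$, initialize $r \leftarrow s$ and repeatedly replace $r$ by a child $c$ that satisfies $\mathrm{sz}(c) \ge 2n/3$, halting when no such child exists. Each iteration strictly increases the depth of $r$, so the walk terminates after at most $n$ steps, and the children inspected in each step can be amortized against the depth increase, for $O(n)$ total work. By induction, $\mathrm{sz}(r) \ge 2n/3$ holds throughout the walk (it is true at $s$ and preserved when we descend). Hence, at termination, every child $c_i$ of $r$ has $\mathrm{sz}(c_i) < 2n/3$, and the ``upward'' piece $U$ (the vertices outside the subtree of $r$) has $|U| = n - \mathrm{sz}(r) \le n/3$.

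I would then split into two cases. \emph{Case (i):} some child $c_i$ of $r$ has $\mathrm{sz}(c_i) \ge n/3$. Let $\MM$ consist of $r$, the edge $(r,c_i)$, and the entire subtree rooted at $c_i$, and let $\NN$ be the rest of $T$ (which still contains $r$). Then $|\VM| = \mathrm{sz}(c_i) + 1 \in [\,n/3+1,\, 2n/3\,]$ and $|\VN| = n - \mathrm{sz}(c_i) \in [\,n/3+1,\, 2n/3\,]$, and by construction $\VM \cap \VN = \{r\}$ and $\EM \cup \EN = \EE_T$. \emph{Case (ii):} every child $c_i$ has $\mathrm{sz}(c_i) < n/3$. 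Together with $U$, every ``piece'' hanging off $r$ has size at most $n/3$. Scan these pieces in any order and greedily put each into a bucket $A$ until the running total first reaches $n/3$; since the last added piece has size at most $n/3$, the final total $|A| \le 2n/3 - 1$. Place the remaining pieces into a bucket $B$; then $|B| = n - 1 - |A|$. Let $\MM$ consist of $r$ together with the pieces in $A$ and the edges from $r$ to them, and define $\NN$ analogously from $B$. Both $|\VM| = |A|+1$ and $|\VN| = |B|+1$ lie in $[\,n/3,\,2n/3\,]$, with $\VM \cap \VN = \{r\}$ and $\EM \cup \EN = \EE_T$.

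The main obstacle is arithmetic at the boundaries: when $n$ is not divisible by $3$, or when a single piece has size exactly $n/3$, the strict vs.\ non-strict inequalities in the walk's stopping rule, the case distinction between (i) and (ii), and the greedy scan must be chosen consistently so that both $|\VM|$ and $|\VN|$ fall in the closed interval $[n/3,\,2n/3]$; replacing $n/3$ and $2n/3$ by $\lceil n/3 \rceil$ and $\lfloor 2n/3 \rfloor$ as needed handles this. Once the thresholds are set, the DFS, the downward walk, and the single greedy pass each run in $O(n)$ time, yielding the claimed $O(n)$ bound.
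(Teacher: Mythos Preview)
The paper does not prove this statement; it is quoted as a known tool from \cite{GrandoniW12,ChechikM20} and used as a black box throughout the divide-and-conquer. Your argument is the standard centroid-style proof and is correct in substance.

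Two minor remarks. First, the boundary arithmetic you flag is a genuine artifact of the statement itself rather than of your proof: for $n=4$ one has $|\VM|+|\VN|=n+1=5$, yet the only integer in $[4/3,\,8/3]$ is $2$, so the bounds as literally written cannot both hold; the intended reading (and the one in the cited references) is up to floors/ceilings or an additive $+1$, exactly as you indicate. Second, in your Case~(ii) the bucket that does \emph{not} receive the upward piece $U$ may collect several children's subtrees, so $\NN$ need not be a single subtree hanging below $r$ in $\BFS(s)$. This is still compatible with how the paper uses the lemma: both $\MM$ and $\NN$ remain connected (through $r$), the side receiving $U$ contains $s$, and $r$ is the root of the other side, which is all the subsequent arguments (primary path $\PP=sr$, the $\GM/\GN$ case analysis) require.
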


 Without loss of generality, we will assume that $s\in \VM$. Thus,  $r$ is the root of $\NN$. Also, note that $s$ and $r$ may or may not be the same.  Let $\GG_{\MM}$ and $\GG_{\NN}$ be the graph induced by the vertices of $\MM$ and $\NN$, respectively. 
 There is one more important term that we will use in our paper:

\begin{definition} (Primary Path $\PP$)
Using the separator lemma,  $\BFS(s)$ can be divided into two sub-trees $\MM$ and $\NN$ with roots $s$ and $r$. The path from $s$ to $r$ is called the primary path and is denoted by $\PP$.
\end{definition}


We now describe our data structure that we will build recursively. We can view the data structure as a binary tree $\TTT$. The root contains data structure for the entire graph $\GG$. We will abuse notation and say the root is the graph $\GG$.

The left child of the root will contain the graph $\GM$ and some weighted edges -- we will describe the utility of these weighted edges in the next section. These weighted edges are not in $\GG$ but are added by our algorithm.  
We will then build a data structure for $\GM$  recursively.
The right child of the root will contain the graph $\GN$, again with some weighted edges. 

At the root of $\TT$, we store the following data structures. For each $v \in \GG$, set $d(s,v)=|sv|$. Similarly, set $d(r,v)= |rv|$.  For each $v \in \GN$, set $d(s,v,\GN) = |sv \DIA \GN|$. Similarly, for each $v \in \GM$, set $d(s,v, \GM) = |sv \DIA \GM|$. All these quantities can be computed using a single source shortest path algorithm in $\TL(m+n)$ time. Additionally, we will find the length of all replacement paths from $s$ to $r$ avoiding edges on the primary path $\PP$. This can be done in $\TL(m+n)$ time using\footnote{This algorithm work for graphs with non-negative edge weights. And our graph may have weighted edges.} \cite{Hershberger2001}. We will set $d(s,r,e) = |sr \DIA e| $ for each edge $e \in \PP$.

We store the above data-structure in each node of $\TT$. If a node of $\TT$ contains graph $H$, then we  can contruct the above data-structures in $\TL(m_H + n_H)$ time.  We now describe our algorithm that finds replacements paths using $\TT$.

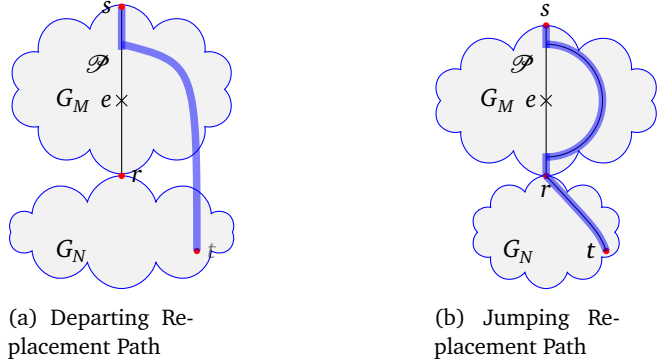
\begin{figure}[hpt!]
        \centering
        \begin{subfigure}[b]{0.15\textwidth}
                \centering
                \begin{tikzpicture}[scale=0.5]
                \node[cloud,
                draw =blue,
                text=cyan,
                fill = gray!10,
                minimum width = 6 cm,
                minimum height = 4.5cm,scale=0.5] (c) at (0,3.3) {};
                
                \node[cloud,
                draw =blue,
                text=cyan,
                fill = gray!10,
                minimum width = 6cm,
                minimum height = 3cm, scale=0.5] (c) at (0,-0.5) {};

                \draw (0,5.5) node[left] () {$s$} -- (0,1) node[right] () {$r$};
                \draw (0,5.5)  -- (0,4.35) [blue][line width=1mm,blue,opacity=0.5] ;
                \filldraw [red] (2,-1) circle (2pt);
                \filldraw [red] (0,1) circle (2pt);
                \filldraw [red] (0,5.5) circle (2pt);
                \draw (0,4.5) .. controls (2,4) .. (2,-1) node[right] {$t$} [blue][line width=1mm,blue,opacity=0.5];
                \draw (0,3) node[left] () {$e$} pic[rotate = 0] {cross=3pt};
                \draw (0,4) node[left] {$\PP$};
                \draw (-2,3) node[right] {$\GM$};
                \draw (-2,-1) node[right] {$\GN$};
                \end{tikzpicture}
                \caption{Departing Replacement Path}
        \end{subfigure}
        \hspace{3 cm}
        \begin{subfigure}[b]{0.15\textwidth}
                \begin{tikzpicture}[scale=0.5]
                
                \coordinate (s) at (0,13);
                \coordinate (r) at (0,9);
                \coordinate (t) at (1.6,7);
                \coordinate (u) at (0,12.5);
                \coordinate (v) at (0,9.6);
                \coordinate (e) at (0,6);
                
                \node[cloud,
                draw =blue,
                text=cyan,
                fill = gray!10,
                minimum width = 6 cm,
                minimum height = 4cm,scale=0.5] (c) at (0,11) {};

                \node[cloud,
                draw =blue,
                text=cyan,
                fill = gray!10,
                minimum width = 4cm,
                minimum height = 3cm,scale=0.5] (c) at (0,7.5) {};

                \draw (s) node[above] () {$s$} -- (r) node[below] () {$r$};
                
                \draw (-2,11) node[right] {$\GM$};
                \draw (-1.4,7) node[right] {$\GN$};
                
                \filldraw [red] (r) circle (2pt);
                \filldraw [red] (s) circle (2pt);
                \filldraw [red] (t) circle (2pt);
                \draw (0,11) node[left] () {$e$} pic[rotate = 0] {cross=3pt};
                \draw  (0,12.5) arc(90:-90:1.5);
                \draw  (0,12.5) arc(90:-90:1.5)[line width=1mm,blue,opacity=0.5];
                \draw (r) .. controls (1.4,7.5) .. (t) node[left] {$t$} ;
                \draw (0,12) node[left] {$\PP$};
                
                \draw  (s) -- ($(u) + (0,-0.1)$) [ line width=1mm,blue,opacity=0.5];
                \draw  (v) -- (r) [ line width=1mm,blue,opacity=0.5];
                \draw (r) .. controls (1.4,7.5) .. (t) node[left] {$t$}[line width=1mm,blue,opacity=0.5] ;

                \end{tikzpicture}
                \caption{Jumping Replacement Path}
        \end{subfigure}
        \caption{Departing and Jumping Replacement Path}
        \label{fig:departingjumping}
\end{figure}

Let us see how we find and store lengths of the replacement paths at the root of $\TT$, that contains graph $\GG$. First, we find the replacement paths for edges on the primary path. Let $R = st \DIA e$ where $e \in \PP$.  We define $R$ to be either {\em jumping} or {\em departing} depending on whether it merges back to the primary path or not.

\begin{definition}(Jumping and Departing paths)
   Let $R = st \DIA e$ where $e \in \PP$ . $R$ is called a \textbf{\em jumping path} if it uses some vertex $u \in \PP$  after $e$. If the path is not jumping then it is a \textbf{\em departing path}. If a replacement path is jumping, then it is called {\em jumping replacement path}. Similarly, we define departing replacement path. See \Cref{fig:departingjumping}
for a visualization of these two kinds of paths.\end{definition}

Note that, jumping or departing path is defined only when the edge fault is on the primary path. Also,  if a replacement path is departing, then the destination $t$ cannot lie on $\PP$. In \Cref{sec:detail}, we will find all jumping replacement paths.

In  \Cref{sec:departing}, we design a new algorithm for finding and storing all departing replacement paths. To this end, we will use the main idea in the paper of Gupta and Singh \cite{GuptaS18}.  In \cite{GuptaS18}, the authors  sampled a set of  vertices  with probability of $\OO(\frac{1}{\sqrt{n}})$. Then, for a vertex $t \in \GG$, they find a sampled vertex near  $t$ on the $st$ path. They call this vertex  $t_s$. Then, they  show the following important lemma, which is the main idea of their paper:

\begin{lemma} (Lemma 11 in \cite{GuptaS18})
\label{lem:guptaaditi}
The number of replacement paths  from $s$ to $t$ that avoid edges in $st_s$ path and also avoid $t_s$  is $O(\sqrt n)$.
\end{lemma}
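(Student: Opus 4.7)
The plan is to combine the random sampling with a structural swap argument about preferred replacement paths. Write the $st$-path as $s = v_0, v_1, \ldots, v_k = t$ with $t_s = v_j$. For each $e \in st_s$, let $u_e$ and $v_e$ denote the detour point and the merge point of the preferred replacement path $R_e = st \DIA e$, so $u_e$ lies strictly above $e$ on $st$ and $v_e$ lies strictly below $e$.

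The first step is to localize the merge points. If $R_e$ avoids $t_s$, then $v_e$ must lie strictly below $t_s$ on $st$: the suffix $R_e[v_e,t]$ runs along the $st$-subpath, so if $v_e = v_b$ with $b \le j$ then $t_s = v_j$ would lie on this suffix, contradicting that $R_e$ avoids $t_s$. Hence $v_e \in \{v_{j+1},\ldots,v_k\}$. Since each vertex is sampled independently with probability $\Theta(1/\sqrt n)$ and $t_s$ is the sampled vertex of $st$ closest to $t$, the subpath from $t_s$ to $t$ contains $O(\sqrt n)$ vertices with high probability. Therefore the set of candidate merge points $v_e$ for relevant replacement paths has size $O(\sqrt n)$.

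The second step is to show that each merge point is produced by essentially one relevant replacement path. Suppose $e_1, e_2 \in st_s$ with $e_1$ strictly above $e_2$, both $R_{e_1}, R_{e_2}$ avoid $t_s$, and $v_{e_1} = v_{e_2} = v$. The prefix $R_{e_1}[s, u_{e_1}]$ lies strictly above $e_1$ and hence above $e_2$; the detour of $R_{e_1}$ is internally vertex-disjoint from $st$ by \Cref{def:replacementpath}; and the suffix $R_{e_1}[v, t]$ lies strictly below $t_s$ and hence below $e_2$. Thus $R_{e_1}$ is a feasible path from $s$ to $t$ avoiding $e_2$, so $|R_{e_1}| \ge |R_{e_2}|$. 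When the detour point $u_{e_2}$ lies at or above $e_1$, the symmetric argument gives $|R_{e_2}| \ge |R_{e_1}|$, and the preferred-replacement-path tiebreaking of \Cref{def:replacementpath} (earliest divergence and lexicographic minimality) forces $R_{e_1} = R_{e_2}$ as paths. Combining these observations with the $O(\sqrt n)$ bound on merge points yields the claim.

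The main obstacle I anticipate is the asymmetric case where the detour point $u_{e_2}$ lies strictly between $e_1$ and $e_2$ on $st$, so that $R_{e_2}$ actually traverses $e_1$ and the direct swap is blocked. Handling it requires the additional structural fact, typical in preferred-replacement-path analyses, that within such a ``collision interval'' the preferred replacement paths share enough of their detours that the number of genuinely distinct paths produced is still controlled by the number of distinct merge points $v_e$. I expect carefully leveraging the earliest-divergence convention to argue injectivity up to path equality in this case to be the key technical step in completing the $O(\sqrt n)$ bound.
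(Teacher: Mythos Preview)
Your first step, localizing the merge points of the relevant replacement paths to the $O(\sqrt n)$ vertices strictly below $t_s$, is correct. The gap is exactly where you flag it, and unfortunately the injectivity you hope for simply fails: a single merge point can absorb as many as $\Theta(\sqrt n)$ genuinely distinct preferred replacement paths. Take the $st$ path $s=v_0,\dots,v_k=t$ with $t_s=v_j$, and for each $i<j$ attach an internally disjoint path of length $C-2i$ from $v_i$ to $t$ (with $C\approx 2j$). Then for $e_i=(v_i,v_{i+1})$ the shortest $s$--$t$ path avoiding $e_i$ is $sv_i$ followed by the attached path from $v_i$, of total length $C-i$; these $j$ paths are pairwise distinct, all avoid $t_s$, and all merge at $t$. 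No tie-breaking convention rescues this, because each $R_{e_i}$ strictly beats $R_{e_{i-1}}$ in length, so the earliest-divergence and lexicographic rules never get invoked. In this example $j=\Theta(\sqrt n)$ is forced only because the attached paths consume $\Theta(j^2)$ vertices; merge-point counting alone sees one merge point and no reason for the bound.

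The paper (in the appendix, for the generalization \Cref{lem:guptaadapted}) proves the lemma by an entirely different mechanism that captures precisely this vertex-consumption phenomenon. Sort the relevant replacement paths by length and let $\UNIQUE(R)$ be the prefix of $\Det(R)$ that is disjoint from all longer detours. Paths with $|\UNIQUE(R)|\ge\sqrt n$ are at most $\sqrt n$ by vertex-disjointness of their unique prefixes. For the first $R$ (in decreasing length order) with $|\UNIQUE(R)|<\sqrt n$, a short triangle-inequality computation shows $|R\setminus sa|\le |at|+O(\sqrt n)$ where $a=\DP(R)$, and then the length-counting fact that at most $\ell$ replacement paths can be shorter than one of excess length $\ell$ (Lemma~9 of \cite{GuptaS18}) bounds the remaining paths by $O(\sqrt n)$. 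Your merge-point observation is a nice structural fact but does not substitute for this counting; to complete your approach you would effectively have to re-import one of these two ingredients.
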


An astute reader can see that the definition of replacement paths in the above definition looks very similar to departing replacement paths. We prove that this is indeed the case. Thus, we can transfer the result in \Cref{lem:guptaaditi} to departing replacement paths. This is the main novelty of the paper. The main technical result of \Cref{sec:departing} is as follows:

\begin{lemma}
\label{lem:departing}
For each  $t \in \GG$, all departing replacement paths to $t$ can be found in deterministic $\TL(m\sqrt n)$ time. Moreover, the length of all such departing paths can be stored in a data structure of size $\TL(n\sqrt n)$ and can be queried in $\TL(1)$ time.
\end{lemma}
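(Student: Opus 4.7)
The plan is to combine the sampling idea of Gupta--Singh with the observation that \emph{departing} replacement paths fit exactly the hypothesis of \Cref{lem:guptaaditi}. First I would build deterministically a set $B \subseteq \VV$ of $\TL(\sqrt n)$ pivots such that every $v \in \VV$ has an ancestor $v_s \in B$ on the $sv$ path in $\SPT(s)$ within $\sqrt n$ tree-edges; a standard heavy-path construction achieves this. For each $b \in B$ I then precompute (i) the BFS tree rooted at $b$, giving $|bu|$ for every $u \in \VV$, and (ii) the full set of replacement lengths $|sb \DIA e|$ for every $e$ on the $sb$ path, using the linear-time replacement-path algorithm (Hershberger--Suri). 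Each pivot costs $\TL(m)$, so this stage totals $\TL(m\sqrt n)$ time and $\TL(n\sqrt n)$ space.

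Next, for each destination $t$ with pivot $b = t_s$, I would compute all departing-path lengths $|st \DIA e|$ for $e \in \PP$ in two stages, split by the position of $e$ along $\PP$. For the $O(\sqrt n)$ faulty edges on the short sub-path between $b$ and $t$ (the ``near'' range) I handle each edge directly: the resulting departing length can be read off in $\TL(1)$ from the precomputed BFS tree rooted at $b$ together with the path data structures. For the remaining faulty edges, strictly above $b$ on $\PP$, I invoke the key structural claim: in this regime a departing path $R = st \DIA e$ coincides with one of the replacement paths counted by \Cref{lem:guptaaditi}, and moreover each such $R$ decomposes through $b$ so that $|R| = |sb \DIA e| + |bt|$, with both terms already stored. \Cref{lem:guptaaditi} then forces the total number of distinct departing lengths per destination to be only $O(\sqrt n)$.

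The main technical obstacle is exactly this identification: proving that ``leaves $\PP$ above $e$ and never re-enters $\PP$'' is the same condition, once $b = t_s$ is the nearest sampled ancestor of $t$, as ``avoids an edge of the $sb$ path and avoids $b$ itself.'' The forward direction asks me to show that any departing replacement path must skirt $b$; the reverse direction asks me to argue that the paths counted by \Cref{lem:guptaaditi}, when viewed in $\GG$, cannot secretly re-merge into $\PP$ below their detour point. I expect the details of this correspondence --- carefully matching detour points with sampled vertices and separately handling the cases $t \in \GM$ and $t \in \GN$ --- to be the heart of the proof; once it is in hand, both the $O(\sqrt n)$ count and the closed-form decomposition fall out immediately from Gupta--Singh.

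Finally, to meet the $\TL(1)$ query requirement, for each $t$ I store the at most $O(\sqrt n)$ distinct departing lengths keyed by the position of the faulty edge along $\PP$, supporting predecessor search in $\TL(1)$ time via a standard balanced search structure; a query $\QU(s,t,e)$ then locates $e$ in that structure and returns the stored length. The per-destination space is $\TL(\sqrt n)$, totalling $\TL(n\sqrt n)$ across all $t$, while construction time is dominated by the per-pivot preprocessing at $\TL(m\sqrt n)$, matching the bounds claimed by the lemma.
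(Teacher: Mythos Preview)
Your proposal correctly identifies that \Cref{lem:guptaaditi} bounds the number of departing paths to each $t$ by $O(\sqrt n)$, and the storage and query scheme you describe is essentially what the paper does. The gap is in how you \emph{compute} the lengths. Your key structural claim is self-contradictory: you first argue (correctly) that a departing path $R=st\DIA e$ with $e$ above $b=t_s$ is one of the paths counted by \Cref{lem:guptaaditi}, i.e.\ a path that avoids $b$; and then you assert that this same $R$ ``decomposes through $b$ so that $|R|=|sb\DIA e|+|bt|$.'' A path that avoids $b$ cannot pass through $b$, so no such decomposition exists. The quantity $|sb\DIA e|+|bt|$ is the length of some $s$--$t$ walk avoiding $e$, but in general it corresponds to a \emph{jumping} path (one that re-enters $\PP$ at or below $b$), not to the departing path you are after. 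Consequently your per-pivot tables $(|sb\DIA e|,\,|bt|)$ give you no handle on departing lengths, and the ``near range'' fallback is equally unspecified: a BFS from $b$ tells you $|bt|$, not $|st\DIA e|$ for a path that bypasses $b$ entirely.

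The paper does not decompose through pivots at all. Having established the $O(\sqrt n)$ count, it computes the departing lengths by a Dijkstra-style sweep on an auxiliary graph whose nodes are pairs $(v,|R|)$ with $R$ a candidate departing path to $v$; edges of $\GAT$ are relaxed only when they extend a current departing path without touching $\PP$, and a new node $(v,|R|)$ is created only when $\DP(R)$ lies strictly above the detour point of the last path recorded for $v$. Because each $v$ acquires at most $O(\sqrt n)$ such nodes, the total work is $O(m_{\GAT}\sqrt n)$. This direct computation is what your proposal is missing; the pivot-based decomposition cannot substitute for it.
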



\begin{figure}[hpt!]
        \centering

        \begin{subfigure}[b]{0.15\textwidth}
                \begin{tikzpicture}[scale=0.5]
                
                \coordinate (s) at (0,13);
                \coordinate (r) at (0,9);
                \coordinate (t) at (1.6,7);
                \coordinate (u) at (0,12.5);
                \coordinate (v) at (0,9.6);
                \coordinate (e) at (0,6);
                
                \node[cloud,
                draw =blue,
                text=cyan,
                fill = gray!10,
                minimum width = 6 cm,
                minimum height = 4cm,scale=0.5] (c) at (0,11) {};

                \node[cloud,
                draw =blue,
                text=cyan,
                fill = gray!10,
                minimum width = 4cm,
                minimum height = 3cm,scale=0.5] (c) at (0,7.5) {};

                \draw (s) node[above] () {$s$} -- (r) node[below] () {$r$};
                
                \draw (0,9.5) node[left] () {$w$};
                
                \draw (-2,11) node[right] {$\GM$};
                \draw (-1.4,7) node[right] {$\GN$};
                
                \filldraw [red] (r) circle (2pt);
                \filldraw [red] (s) circle (2pt);
                \filldraw [red] (t) circle (2pt);
                \draw (0,11) node[left] () {$e$} pic[rotate = 0] {cross=3pt};
                \draw  (0,12.5) arc(90:-90:1.5)[line width=1mm,blue,opacity=0.5];
                \draw (0,12) node[left] {$\PP$};
                
                \draw  (s) -- ($(u) + (0,-0.1)$) [ line width=1mm,blue,opacity=0.5];
                \draw  (v) -- (r) [ line width=1mm,blue,opacity=0.5];
                \draw (r) .. controls (1.4,7.5) .. (t) node[left] {$t$}[line width=1mm,blue,opacity=0.5] ;

                \end{tikzpicture}
                \caption{ $ e \in \PP$}
        \end{subfigure}
                   \hspace{3 cm} \begin{subfigure}[b]{0.15\textwidth}
                \centering
                \begin{tikzpicture}[scale=0.5]
                \node[cloud,
                draw =blue,
                text=cyan,
                fill = gray!10,
                minimum width = 6 cm,
                minimum height = 4.5cm,scale=0.5] (c) at (0,3.3) {};
                
                \node[cloud,
                draw =blue,
                text=cyan,
                fill = gray!10,
                minimum width = 4cm,
                minimum height = 3cm,scale=0.5] (c) at (0,-0.5) {};

                \draw (0,1) node[below] () {$r$};
                \draw (0,5.5)  -- (0,1) [blue][line width=1mm,blue,opacity=0.5] ;
                \filldraw [red] (1.6,-1) circle (2pt);
                \filldraw [red] (0,1) circle (2pt);
                \filldraw [red] (0,5.5) circle (2pt);
                \draw (0,1) .. controls (1.4,-0.5) .. (1.6,-1) node[left] {$t$} [blue][line width=1mm,blue,opacity=0.5];
                \draw (2,3.25) node[left] () {$e$} pic[rotate = 0] {cross=3pt};
                \draw (0,4) node[left] {$\PP$};
                \draw (-2,3) node[right] {$\GM$};
                \draw (-1.5,-1) node[right] {$\GN$};
                \end{tikzpicture}
                \caption{$e \notin \PP $}
        \end{subfigure}
 
        \caption{ $e \in \GM $ and $ t \in \GN$}
         \label{fig:gmgn}
\end{figure}

\section{Algorithm to build SDO(1): Replacement paths that are not departing}
\label{sec:detail}
In \Cref{sec:departing}, we will find all and store all departing replacement paths.
Thus, we just need to concentrate on the replacement paths that are either jumping or the faulty edge $e \notin \PP$.
We now divide remaining replacement paths depending on where the destination $t$ and faulty edge $e$ lies. There are  following cases:

\subsection{$e \in \GM$ and $t \in \GN$}
\label{sec:gmgn}

This case itself can be divided into two cases depending on whether $e$ lies on the primary path or not.
\begin{enumerate}
    \item $e \in \PP$ (See  \Cref{fig:gmgn}(a))

Let $R = st \DIA e$.
If $R$ is departing then we will see how to find it in  \Cref{sec:departing}. So, assume that $R$ is jumping . This implies that $R$ merges back to $\PP$ at a vertex, say $w$, after the edge $e$. Since $t \in \GN$, $st = sw + wr + rt$.  Thus, after merging with $\PP$ at $w$, the replacement path passes through $r$. In that case, $|st \DIA e| = |sr \DIA e| + |rt|$. We can easily find the right hand side of the above equality as we have stored $d(s,r,e) = |sr \DIA e|$ and $d(r,t) = |rt|$.

\item $e \notin \PP$ (See  \Cref{fig:gmgn}(b))

In this case, we claim that $st \DIA e = st$. The $st$ path has $\PP$ as its prefix. Since $\PP$ lies in $\GM$ and survives after the deletion of $e$, $st$ path remains intact.

\end{enumerate}

\subsection{$e \in \GM$ and $t \in \GM$}
\label{subsec:MM}

\begin{figure}[hpt!]

        \centering

        \begin{subfigure}[b]{0.15\textwidth}
                \begin{tikzpicture}[scale=0.5]
                
                \coordinate (s) at (0,13.5);
                \coordinate (r) at (0,8.5);
                \coordinate (t) at (1,9.6);
                \coordinate (u) at (0,12.5);
                \coordinate (v) at (0,9.6);
                \coordinate (e) at (0,6);
                
                \node[cloud,
                draw =blue,
                text=cyan,
                fill = gray!10,
                minimum width = 6.5 cm,
                minimum height = 5 cm,scale=0.5] (c) at (0,11) {};

                \node[cloud,
                draw =blue,
                text=cyan,
                fill = gray!10,
                minimum width = 5 cm,
                minimum height = 2.5 cm,scale=0.5] (c) at (0,7.25) {};

                \draw (s) node[above] () {$s$} -- (r) node[below] () {$r$};
                
                \draw (2,11) node[right] {$\GM$};
                \draw (1.4,7.5) node[right] {$\GN$};
                
                \filldraw [red] (r) circle (2pt);
                \filldraw [red] (s) circle (2pt);
                \filldraw [red] (t) circle (2pt);
                \draw (0,11) node[right] () {$e$} pic[rotate = 0] {cross=3pt};
                \draw (0,12) node[left] {$\PP$};
                
                \draw  (s) -- ($(u) + (0,-0.75)$) [ line width=1mm,blue,opacity=0.20];
                \draw (0,11.8) .. controls (-1.5,5.5) .. (r) [line width=1mm,blue,opacity=0.5] ;
                \draw  (s) -- ($(u) + (0,-0.75)$) [ line width=1mm,blue,opacity=0.5];
                
                \draw  (v)+(0,1) -- (r) [ line width=1mm,blue,opacity=0.5];
                \draw (0,10.5) .. controls (0.8,10.5) .. (t) node[left] {$t$}[line width=1mm,blue,opacity=0.5] ;

                \end{tikzpicture}
                \caption{ $ e \in \PP, t \in \GM$ and the jumping replacement path uses $\GN$}
        \end{subfigure}
                   \hspace{3 cm} \begin{subfigure}[b]{0.15\textwidth}
                \centering
                \begin{tikzpicture}[scale=0.5]
                \node[cloud,
                draw =blue,
                text=cyan,
                fill = gray!10,
                minimum width = 6.5 cm,
                minimum height = 5cm,scale=0.5] (c) at (0,4.3) {};
                
                \node[cloud,
                draw =blue,
                text=cyan,
                fill = gray!10,
                minimum width = 5cm,
                minimum height = 2.5cm,scale=0.5] (c) at (0,0.55) {};

                \draw (0,1.7) node[below] () {$r$};
                \draw (0,6.8)  -- (0,1.7) [blue][line width=1mm,blue,opacity=0.5] ;
                \filldraw [red] (-2,4) circle (2pt);
                \filldraw [red] (0,1.75) circle (2pt);
                \filldraw [red] (0,6.8) circle (2pt);
                \draw (0,1.75) .. controls (-1.4,0) .. (-2,4) node[left] {$t$} [blue][line width=1mm,blue,opacity=0.5];
                \draw (-1,4) node[left] () {$e$} pic[rotate = 0] {cross=3pt};
                \draw (0,5.3) node[left] {$\PP$};
                \draw (0,6.8) node[above] {$s$} ;
                \draw (2,4) node[right] {$\GM$};
                \draw (1,0.5) node[right] {$\GN$};
                \end{tikzpicture}
                \caption{$ e \in \GM \setminus \PP, t \in \GM$ and the replacement path uses $\GN$ }
        \end{subfigure}
 
        \caption{ $e \in \GM $ and $ t \in \GM$}
         \label{fig:gmgm}
\end{figure}

Since both $e$ and $t$ lie in $\GM$, one may think that we can recurse our algorithm in $\GM$ to find $st \DIA e$.
If $st \DIA e$ completely lies inside $\GM$, this is indeed the case. However,  $st \DIA e$ may also use edges of $\GN$.
To handle such cases, before recursing in $\GM$, we will add weighted edges to it. For each $v \in \GM$, we will add an edge from $r$ to $v$ with a weight $|rv \DIA \GM|$.
We have already calculated this weight, it is stored in $d(r,v,\GM)$.  Let the set of weighted edges added to $\GM$  be called $X$. 
We now look at two cases, (1) $e \in \PP$ and (2) $e \notin \PP$.

\subsubsection{$e \in \PP$ }
\label{sec:gmgmp}
Let $R = st \DIA e$ be a jumping replacement path.
 We will show that $st \DIA e = sr \DIA e + rt$. As we have calculated the length of both the paths in the right-hand side of the above equality, there is no need to even recurse in this case. To prove the above equality, we first prove the following simple lemma:

\begin{lemma}
\label{lem:passr}
Let $e \in  \PP$,  $t \in \GM$. Assume that   the jumping replacement path $R =st \DIA e$ uses some edges of $\GN$. Then $st \DIA e$ passes through $r$.
\end{lemma}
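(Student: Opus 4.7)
Plan: I will argue by contradiction, assuming $r \notin R$. The strategy is to exploit the canonical decomposition of a replacement path together with the shortest-path-tree identity $|sv| = |sr| + |rv|$ for $v \in \VN$, and then use the jumping hypothesis to manufacture a competing $s$-$t$ path that avoids $e$, routes through $r$, and has length at most $|R|$ -- contradicting the definition of $R$ (\Cref{def:replacementpath}).

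Setting up the structure, I write $R = st[s,u] + D + st[y,t]$ with divergence point $u$ above $e$ on $st$ and merge point $y$ below $e$. Since $e \in \PP \cap st$ and $\MM$ is a subtree of $\SPT(s)$ containing both $s$ and $t \in \VM$, the tree path $st$ lies entirely in $\VM$, and $e$ sits on the common prefix of $\PP$ and $st$; hence $u \in \PP$ and both $st[s,u]$ and $st[y,t]$ are contained in $\GM$. Consequently, every $\GN$-edge used by $R$ must lie inside the detour $D$, so $D$ visits some $p \in \VN \setminus \{r\}$.

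Using the jumping condition, $R$ meets $\PP$ at some vertex $w$ strictly below $e$, so $w \in \VM$ lies between $e$ and $r$ on $\PP$ and splits $R$ as $R[s,w] + R[w,t]$. I build the alternative path
\[
R' \;=\; (sr \DIA e) + \PP[r,w] + R[w,t],
\]
check that $R'$ avoids $e$ (the first piece by the definition of $sr \DIA e$; the middle piece because $\PP[r,w]$ lies in $\GM$ strictly below $e$; the last piece because it is a suffix of $R$), and compare $|R'|$ to $|R|$. The crucial estimate is $|R[s,w]| \ge |sr \DIA e| + |\PP[r,w]|$, which I derive from the tree identity $|sv|=|sr|+|rv|$ for $v \in \VN$ applied at the intermediate vertex $p$ visited by $R$. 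This yields $|R'| \le |R|$; a strict gap contradicts $R$ being shortest, and the tight case contradicts the lexicographic-minimality clause of \Cref{def:replacementpath} once I observe that $R'$ agrees with $sr \DIA e$ near $s$ while $R$ agrees with $st$ near $s$.

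The main obstacle is the inequality $|R[s,w]| \ge |sr\DIA e| + |\PP[r,w]|$. The separator property alone does not force any $s$-$t$ walk in $\GG$ through $r$, since $\GG$ can contain cross edges between $\VM \setminus \{r\}$ and $\VN \setminus \{r\}$ that belong to neither $\GM$ nor $\GN$. What carries the day is the combination of the jumping hypothesis, which supplies the anchor $w \in \PP$ below $e$, with the BFS-tree identity $|sv|=|sr|+|rv|$ for $v \in \VN$: together they rule out any genuine shortcut into $\VN$ that bypasses $r$ and shorten the competing path $R'$ through $r$ accordingly.
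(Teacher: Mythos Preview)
Your crucial estimate $|R[s,w]|\ge |sr\DIA e|+|\PP[r,w]|$ does not follow from the hypotheses as you have set them up. The chain you have in mind requires the $\GN$-vertex $p$ to lie on the prefix $R[s,w]$: then indeed
\[
|R[s,w]|=|R[s,p]|+|R[p,w]|\ge |R[s,p]|+|pr|+|rw|\ge |sr\DIA e|+|rw|,
\]
the last step because $R[s,p]$ followed by the tree path $pr\subset \NN$ is an $s$--$r$ walk avoiding $e$. But you never argue that $p$ precedes $w$ on $R$. It is entirely possible that the detour first reaches a vertex $w\in\PP$ strictly below $\LCA(t,r)$ (so $w\notin st$, consistent with the single-detour property), only then enters $\GN$, and finally exits to the $st$-branch below $\LCA(t,r)$ without touching $\PP$ again. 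In that ordering $R[s,w]$ lies entirely in $\VM$, the tree identity at $p$ says nothing about it, and one can have $|R[s,w]|=|sw\DIA e|<|sr\DIA e|+|rw|$, so your inequality fails. This is exactly the configuration the paper treats as its Case~1; your implicit assumption that $p$ comes first corresponds only to the paper's Case~2.

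The tiebreak for the equality case is also unsound. Both $sr\DIA e$ and $R$ begin along $\PP$, which coincides with $st$ above $\LCA(t,r)$, so ``$R'$ agrees with $sr\DIA e$ near $s$ while $R$ agrees with $st$ near $s$'' does not distinguish the two paths; and you give no reason why $R'$ should be the lexicographically smaller one rather than the larger. Moreover $R'$ need not be simple (if $sr\DIA e$ merges back to $\PP$ above $w$ it already contains $w$), and after shortcutting it need not satisfy condition~(1) of \Cref{def:replacementpath}, so it is not even a legitimate competitor in that tiebreak.
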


\begin{proof}
Since $R$ is jumping, it merges with $\PP$.  There are two ways in which $R$ can merge with $\PP$.
\begin{enumerate}
    \item $R$ merges with $\PP$ and then visits the edges of $\GN$.

    Let us assume that $u$ is the last vertex of $\GN$ in the path  $R$ and $R$  merges with $\PP$ at $w$. Since $R$ first merges with  $\PP$ and then  visits the  edges of $\GN$, $u$ comes after $w$ on $R$. Since $w$ is below $e$ on $\PP$, we claim that the sub-path $wu$ of $su$ survives in $\GG\setminus e$ and is also the shortest path from $w$ to $u$. But  $wu$ path passes through $r$. Thus, $st \DIA e$ passes through $r$ by construction in \Cref{lem:separator}.






    \item $R$ visits an edge of $\GN$ and then merges with $\PP$ (See  \Cref{fig:gmgm}(a))


    In this case, we will show that $R$ merges with $\PP$ at $r$. For contradiction, let $w$ be the  vertex at which $R$ merges with $\PP$  such that $w  \neq r$.
 Let $u$ be the first vertex of $\GN$ visited by $R$.
  Also, $w$ lies after $u$ on path $R$. Thus, the replacement path $R = R[s,u]+ R[u,w] + R[w,t]$.  Since $w$ lies below $e$ on $\PP$, $wu$ sub-path of $su$ does not contain $e$ and is also the shortest path from $w$ to $u$. Thus, $R[u,w] = uw$. But $uw$ path passes through $r$. This implies that $R$ merges with $\PP$ at $r$ contradicting our assumption that $w \neq r$.
\end{enumerate}

\end{proof}
 We are now ready to prove the main lemma in this subsection.

\begin{lemma}
\label{lem:ste}
Let $e \in \PP$ and   $t \in \GM$. Assume that   the jumping replacement path $R =st \DIA e$ uses some edges of $\GN$. Then $|st \DIA e| = |sr \DIA e| + |rt|$.
\end{lemma}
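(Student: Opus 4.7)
The plan is to combine \Cref{lem:passr} (which tells us that $R$ passes through $r$) with an optimality-based decomposition, and then argue by contradiction that the shortest $rt$ path in $\GG$ must avoid $e$. First, \Cref{lem:passr} lets me write $R = R[s,r] + R[r,t]$, where both sub-paths avoid $e$. Because $R$ is the shortest $st$ path avoiding $e$, each sub-path is optimal in its own restricted class: $|R[s,r]| = |sr \DIA e|$ and $|R[r,t]| = |rt \DIA e|$. This gives $|R| = |sr \DIA e| + |rt \DIA e| \ge |sr \DIA e| + |rt|$.

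The remaining task is to establish the matching upper bound, which reduces to showing $|rt \DIA e| = |rt|$, i.e., that the shortest $rt$ path in $\GG$ can be chosen to avoid $e$. Suppose, toward a contradiction, that some shortest $rt$ path $Q$ uses $e = (u,v)$ with $u$ the endpoint above $v$ on $\PP$. Since $u$ is the parent of $v$ in $\SPT(s)$ and both lie on $\PP$, a short computation with tree distances gives $|ru| = |rv| + 1$ in $\GG$. If $Q$ traverses $e$ in direction $u \to v$, then $|Q| \ge |ru| + 1 + |vt| = |rv| + 2 + |vt|$, which exceeds $|rv| + |vt| \ge |rt|$ by the triangle inequality, a contradiction. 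So $Q$ traverses $e$ in direction $v \to u$; and since $Q$ is shortest, this forces $|Q[r,v]| = |rv|$, $|Q[u,t]| = |ut|$, and $|rt| = |rv| + 1 + |ut|$.

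Next I construct the alternative $st$ path $P^{*} = \PP[s,u] + Q[u,t]$. Both pieces avoid $e$: the prefix of $\PP$ ending at $u$ does not contain the edge $(u,v) = e$, and the suffix $Q[u,t]$ of $Q$ begins just after $e$ has been traversed. So $P^{*}$ is an $st$ path avoiding $e$ of length $|su| + |ut|$, giving $|R| \le |su| + |ut|$. On the other hand, using $|sr \DIA e| \ge |sr| = |su| + 1 + |rv|$ together with $|rt| = |rv| + 1 + |ut|$, one gets
\[
|sr \DIA e| + |rt| \;\ge\; |sr| + |rt| \;=\; |su| + |ut| + 2|rv| + 2 \;>\; |su| + |ut| \;\ge\; |R|,
\]
which contradicts the lower bound $|R| \ge |sr \DIA e| + |rt|$ from the first paragraph. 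Hence no shortest $rt$ path uses $e$, so $|rt \DIA e| = |rt|$, completing the proof.

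The main obstacle is ruling out the possibility that some shortest $rt$ path uses $e$; this requires a case split on the direction of traversal of $e$ in $Q$ and combines tree-distance identities along $\PP$ (namely $|ru| = |rv|+1$ and $|sr| = |su|+1+|rv|$) with the triangle inequality and a carefully chosen alternative $st$ path $P^{*}$ built from a prefix of $\PP$ and a suffix of $Q$.
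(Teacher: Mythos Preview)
Your proof is correct and follows essentially the same strategy as the paper: invoke \Cref{lem:passr} to split $R$ through $r$, then argue by contradiction that the shortest $rt$ path avoids $e$ by exhibiting a shorter $s$--$t$ avoiding path built from a prefix of $\PP$ and a suffix of the offending $rt$ path. The paper's version is terser --- it takes $w$ to be the endpoint of $e$ closer to $s$ and bounds $|R| \ge |sr\DIA e| + |rw| + |wt| > |sw| + |wt|$ directly --- whereas you additionally rule out the $u\to v$ traversal direction explicitly and track the exact tree distances $|ru|=|rv|+1$ and $|sr|=|su|+1+|rv|$; this extra care is not wrong, just more detailed than needed.
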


\begin{proof}
Using  \Cref{lem:passr}, $R$ passes through $r$. So, we have $R = R[s,r]+ R[r,t]$. The first  summand on the right hand side of the above equality represent a path from $s$ to $r$ avoiding $e$. Thus, $|R[s,r]| = |sr \DIA e|$.

We will now show that $|R[r,t]| = |rt|$. Clearly $|R[r,t]| \ge |rt|$ as the first path avoid $e$ and the second path may or may not. We will now show that the second path also avoids $e$ which will imply that both paths are of same length.  For contradiction, assume that $rt$ passes through $e$. Let us assume that there is a vertex $w$ before the edge $e$ on path $\PP$  such that $rt = rw + wt$. Thus,  $rw$ passes through $e$  but $wt$ avoids $e$. But then, there is a path $R'$ such that $R' = sw + wt$ which avoids $e$. We claim that $|R'| < |R|$ contradicting the fact that $R$ is the replacement path from $s$ to $t$ avoiding $e$.

To this end,
\begin{tabbing}
\hspace{30mm}$|R|$\= $= |sr \DIA e| + |R[r,t]|$\\
\>$\ge |sr \DIA e| + |rt|$\\
\>$=  |sr \DIA e|  + |rw| + |wt|$\\
\> $\ge |sr \DIA e|+|wt|$\\
Since $|sw| < |sr|$, $|sw| < |sr \DIA e|$\\
\>$> |sw| + |wt|$\\

\>$= |R'|$.
\end{tabbing}
\
This completes the proof of the lemma.
\end{proof}


\subsubsection{$e \notin \PP$ }
\label{sec:gmgnnp}
In this case, we will show a path in $\GM \cup X$ such that it avoids $e$ and has the same length as $st \DIA e$. Please see  \Cref{fig:gmgm}(b) for a visualization of this case.

\begin{lemma}
\label{lem:enotinP}
Let $e \in \GM \SM \PP$ and   $t \in \GM$. Assume that   the replacement path $R =st \DIA e$ uses some edges of $\GN$. Then there is a path in $\GM \cup X$ that avoids $e$ and has length $|st \DIA e|$.
\end{lemma}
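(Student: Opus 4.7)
The plan is to split $R$ at its last visit to $\VN \SM \{r\}$ and then replace everything up to that point by either $\PP$ alone, or by $\PP$ followed by a single weighted edge of $X$.

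Every edge of $\GN$ has both endpoints in $\VN$ and therefore at least one endpoint in $\VN \SM \{r\}$, so the hypothesis that $R$ uses an edge of $\GN$ forces $R$ to visit some vertex of $\VN \SM \{r\}$. I would let $v$ be the \emph{last} such vertex along $R$ and $x$ be the vertex of $R$ immediately following $v$; by the choice of $v$, every vertex of $R$ strictly after $v$ lies in $\VM$, so $x \in \VM$. This gives two cases according to whether $x = r$ or $x \in \VM \SM \{r\}$.

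In the first case ($x = r$), every vertex of $R$ after $r$ lies in $\VM$, so every edge of $R[r,t]$ lies in $\GM$. The candidate replacement will be $R' = \PP + R[r,t]$, which is a path in $\GM$ from $s$ to $t$ avoiding $e$ (since $e \notin \PP$ and $R[r,t] \subseteq R$). To check $|R'| = |R|$ it suffices to show $|R[s,r]| = |sr|$: the shortest-path lower bound gives $|R[s,r]| \ge |sr|$, while applying the optimality of $R$ against the candidate $R'$ yields $|R[s,r]| \le |sr|$.

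In the second case ($x \in \VM \SM \{r\}$), the edge $(v,x)$ is a cross edge and the suffix $R[x,t]$ lies entirely in $\GM$. Let $f$ denote the weighted edge of $X$ from $r$ to $x$, of weight $|rx \DIA \GM|$, and set $R' = \PP + f + R[x,t]$. The key length identity is $|R[s,v]| = |sr| + |rv|$: since $v \in \VN$, the $\SPT(s)$-path from $s$ to $v$ runs through $\PP$ and therefore avoids $e$ (because $e \in \GM \SM \PP$), so optimality of $R$ forces $|R[s,v]|$ to equal $|sv \DIA e| = |sv| = |sr| + |rv|$. Matched against this, the $\SPT(s)$ tree path from $r$ to $v$ is contained in $\GN$ and has length $|rv|$, so appending the cross edge $(v,x)$ exhibits an $r$-to-$x$ path avoiding $\GM$ of length $|rv|+1$, whence $|rx \DIA \GM| \le |rv|+1$. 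Combining these gives $|R'| \le |sr| + (|rv|+1) + |R[x,t]| = |R[s,x]| + |R[x,t]| = |R|$; the reverse inequality follows because $R'$, after expanding $f$ into its underlying $\GG$-path, is a valid $s$-to-$t$ path in $\GG \SM e$.

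The delicate step is the length accounting in the second case: I need the identity $|R[s,v]| = |sr|+|rv|$, which depends crucially on $e \in \GM \SM \PP$ keeping the $\SPT(s)$-route to every vertex of $\VN$ alive, together with the matching upper bound $|rx \DIA \GM| \le |rv|+1$ delivered by the specific tree-plus-cross-edge route through $v$. Once both are in place, the optimality of $R$ collapses the two inequalities into an equality and $R'$ realises the desired length inside $\GM \cup X$.
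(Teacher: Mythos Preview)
Your proof is correct and follows essentially the same strategy as the paper: locate the last point where $R$ sits in $\VN$, observe that the shortest-path prefix to that point survives (since $e \in \GM \setminus \PP$), and replace the portion through $\GN$ by the primary path together with a single weighted edge of $X$. The only difference is organizational: the paper takes $u$ to be the last vertex of $\VN$ on $R$ and asserts directly that $R[s,u]=su$ as a path (so $R$ literally passes through $r$ and $|R[r,v]|=|rv\DIA\GM|$), whereas you take the last vertex of $\VN\setminus\{r\}$, split off the case $x=r$ explicitly, and close the length identity by a two-sided squeeze using only length comparisons rather than path identity---a slightly more robust bookkeeping that does not rely on unique shortest paths.
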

\begin{proof}
Let us first prove that  $R$ can alternate between edges of $\GN$ and $\GM$ just once. To this end,  let   $u $ be the last vertex of $\GN$ visited by $R$. Thus, $R = R[s,u] + R[u,t]$. But the shortest path $su$ remains intact in $\GG \setminus e$ as $e \notin \PP$ and the shortest $su$ path passes through $r$. This implies that $R = R[s,u] + R[u,t] = R[s,r] +\ R[r,u] +\ R[u,t]$. By construction, the first path $R[s,r] = sr$ and it completely lies in $\GM$. The second path $R[r,u]=ru$ and it completely lies in $\GN$. Let $v$ be the vertex just after $u$ in $R$. So, $v \in \GM$. So we have $R =sr +\ ru +\ (u,v)+R[v,t]$. By construction,  $R[v,t]$ completely lies in $\GM$. Thus, $R$ alternates from edges of $\GN$ to $\GM$\ just once.

From the above discussion $R = R[s,r] + R[r,u] + (u,v) + R[v,t] = R[s,r] + R[r,v]+ R[v,t]$.   The first and the last paths of the above equality completely lies in $\GM$. By construction, $R[r,v]$ does not contain any edge of $\GM$. Thus, $R[r,v]$ is the shortest path from $r$ to $v$ avoiding edges of $\GM$, that is $|R[r,v]| =|rv \DIA \GM| = d(r,v,\GM)$.

Since  we have added an edge from $r$ to $v$ with weight $d(r,v,\GM)$, we will now show that there is a path in $\GM \cup X$ that avoids $e$ and has same weight as $R$. Consider the path $R' = R[s,r] + (r,v) + R[v,t]$. The reader can check all the three subpaths lie completely in $\GM \cup X$. Moreover, $(r,v) \in X$ is a weighted edge. Thus the weight of $|R'|  =|R[s,r]| + wt_{(\GM\cup X)}(r,v)+|R[v,t]| =|R[s,r]| + d(r,v,\GM)+|R[v,t]|=|R[s,r]| + |R[r,v]|+|R[v,t]|=|R|$. 
 This completes the proof.


\end{proof}

\subsection{$e \in \GN$ and $t \in \GM$}
\label{sec:gngm}
In this case, $st$ path completely lies in $\GM$ and thus survives. Thus, $|st \DIA e| = |st| = d(s,t)$.

\subsection{$e \in \GN$ and $t \in \GN$}
\label{subsec:NN}

In this case, we will recurse in $\GN$. However, $\GN$ may not contain the source $s$ if $s \neq r$. In that case, before recursing, we add a new source $s_{\NN}$ in $ \GN$.
We also  add some {\em weighted }edges to $\GN$. For each $v \in \GN$, we add an edge from $s_{\NN}$ to $v$ with a weight $d(s,v,\GN) = |sv \DIA \GN| $.  Let this set of edges be called $Y$.
Let us now show that we will find all the replacement paths if we recurse in $\GN \cup Y$.

\begin{lemma}
\label{lem:4.4}
Let $e \in \GN$ and $t \in \GN$. There is a path from $s_{\NN}$ to $t$ in $\GN \cup Y$   that avoids $e$ and has weight $|st \DIA e| $.
\end{lemma}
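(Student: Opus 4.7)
The plan is to identify a vertex $v \in \VN$ along $R = st \DIA e$ at which $R$ splits into a non-$\EN$ prefix and an $\EN$-only suffix, and then to show that the path in $\GN \cup Y$ consisting of the $Y$-edge $(s_{\NN}, v)$ followed by the shortest $vt$-path in $\GN$ avoiding $e$ has weight exactly $|st \DIA e|$. To pick $v$, I would let $u$ be the last vertex of $R$ lying in $\VM \setminus \{r\}$; such a $u$ exists because $s \in \VM \setminus \{r\}$ (the case $s \ne r$ in which $s_{\NN}$ is introduced), and let $v$ be the vertex immediately following $u$ on $R$. Since $u$ is the \emph{last} vertex of $R$ in $\VM \setminus \{r\}$, every vertex of $R$ from $v$ onward lies in $\VN$, so $R[v,t]$ is a path in $\GN$ (each of its edges has both endpoints in $\VN$, and hence lies in $\EN$) and it avoids $e$.

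The central claim is $|sv \DIA \GN| = |R[s,v]|$. Because $s, u \in \VM$, the $\SPT(s)$-path from $s$ to $u$ stays entirely inside the subtree $\MM$ and uses only edges of $\EM$; in particular it avoids every edge of $\EN$, and therefore avoids $e$. Hence $|su \DIA e| = |su|$. Since $u$ lies on the shortest replacement path $R$, the standard subpath-optimality argument collapses the chain $|R[s,u]| + |R[u,t]| = |R| = |st \DIA e| \le |su \DIA e| + |ut \DIA e| \le |R[s,u]| + |R[u,t]|$ into equalities, giving $|R[s,u]| = |su|$. Concatenating the $\SPT(s)$-path from $s$ to $u$ with the edge $(u,v)$ produces an $sv$-walk of length $|su|+1$ that avoids $\EN$: the edge $(u,v)$ is either a crossing edge (if $v \ne r$) or an $\EM$-edge (if $v = r$), neither of which lies in $\EN$. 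Hence $|sv \DIA \GN| \le |su|+1 = |R[s,v]|$; the reverse bound $|sv \DIA \GN| \ge |sv \DIA e| = |R[s,v]|$ holds because avoiding the larger edge set $\EN \supseteq \{e\}$ cannot decrease distances. So $|sv \DIA \GN| = |R[s,v]|$.

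A parallel argument handles the suffix: $R[v,t]$ is a $vt$-path lying in $\GN$ and avoiding $e$, so $|vt \DIA e|_{\GN} \le |R[v,t]|$, and combining this with $|vt \DIA e|_{\GN} \ge |vt \DIA e| = |R[v,t]|$ (again by subpath-optimality) gives $|vt \DIA e|_{\GN} = |R[v,t]|$. The $\GN \cup Y$-path $P$ obtained by taking the $Y$-edge $(s_{\NN},v)$ followed by any shortest $vt$-path in $\GN$ avoiding $e$ therefore has weight $|sv \DIA \GN| + |vt \DIA e|_{\GN} = |R[s,v]| + |R[v,t]| = |R| = |st \DIA e|$, and it avoids $e$ by construction. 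The main obstacle is the choice of $v$: with this specific placement, the separator-lemma fact that the BFS $su$-path stays inside $\MM$ (because both $s, u \in \VM$) is exactly what forces $|sv \DIA \GN| = |R[s,v]|$; an arbitrary earlier choice of $v$ could leave the prefix of $R$ using edges of $\EN$ and cause this equality to fail.
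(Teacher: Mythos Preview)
Your proof is correct and follows essentially the same strategy as the paper: split $R = st \DIA e$ at a vertex $v$ so that the prefix realises $|sv \DIA \GN|$ (hence matches the weight of the $Y$-edge $(s_{\NN},v)$) and the suffix lies entirely in $\GN$. The paper takes $u$ to be the last vertex of $\VM$ on $R$ and uses $R[v,t]$ itself as the suffix; you take $u$ to be the last vertex of $\VM \setminus \{r\}$ and replace the suffix by an arbitrary shortest $vt$-path in $\GN \setminus e$, proving its length equals $|R[v,t]|$ via the two-sided inequality. Your exclusion of $r$ makes the verification that $(u,v) \notin \EN$ cleaner, and your inequality chain for $|sv \DIA \GN| = |R[s,v]|$ is more explicit than the paper's one-line claim, but the underlying decomposition is the same.
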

\begin{proof}
\label{GNPath}

Let $R =st \DIA e$.
Let us first prove that once  $R$ visits an edge of  $\GN$, it cannot visit an edge of $\GM$ anymore. Let $u$ be the last vertex of $\GM$ visited by $R$ and $v$ be the vertex after $u$ in $R$. Thus, $v \in \GN$.  Thus, $R = R[s,u] +(u,v)+ R[v,t]$. By construction, $R[v,t]$ completely lies in $\GN$. The shortest path $su$ remains intact in $\GG \setminus e$ as $e \in \GN$ and the path $su$ completely lies in $\GM$. Thus, $R[s,u] = su$ . So, the path $R$ first visits only edges of $\GM$(in $R[s,u])$, then goes to $\GN$ (by taking the edge $(u,v)$) and then remains in $\GN$ (in $R[v,t]$). Thus, $R$ does not visit any edge of  $\GM$ once it visits an edge of $\GN$.

By the above discussion,  $R = R[s,u] +\ (u,v)  + R[v,t] = R[s,v] +R[v,t]$ such that   $R[v,t]$  completely lies  in $\GN$. Also,  $R[s,v]$ completely lies in $\GM$ except the last edge which has one endpoint in $\GM$ and other endpoint $v \in \GN$. 
Thus, $R[s,v] = |sv \DIA \GN|= d(s,v,\GN)$ and we have $R =  sv \DIA \GN +\ R[v,t]$.

Now consider a path $R'$ that avoids $e$ from $s_{\NN}$ to $t$ in  $\GN \cup Y$. We construct this path as follows: we will first take the weighted edge $s_{\NN} \to v$ and then the path $R[v,t]$. The reader can check that these two subpaths completely lie in  $\GN \cup Y$. The weight of this path is  $|R'| = |sv \DIA \GN|  + |R[v,t]| = d(s,v,\GN)  + |R[v,t]|=|R[s,v] |\ +\ |R[v,t]|= |R| $.

\end{proof}


\subsection{One endpoint of $e$ is in $\GM$ and the other is in $\GN$ and $t \in \GG$}
\label{sec:middle}
This is an easy case as the $st$ path survives in $\GG \SM e$ as $st$ does not contain $e$. Thus, $st \DIA e = st$


\section{Departing Replacement paths}
\label{sec:departing}

In the previous section, we have already found out a replacement path if it is jumping or if the edge failure is not on the primary path. In this section, we will try to find the best departing path after an edge failure. To this end, we define the following:
\begin{definition} (Candidate departing paths)
Let $e \in \PP$. $P$ is called the candidate departing path for $e$, if among all departing paths avoiding $e$, $P$ has the minimum length. In case there are many departing paths avoiding same edge with same length, we will break ties using Definition \ref{def:replacementpath}.
\end{definition}

Note that, a candidate departing path may or may not be a replacement path. In case it is, we call it a departing replacement path. Also, $P$ may be a candidate departing path for all edges in an interval, say $yz \in \PP$, but may be a replacement path for a sub-interval of $yz$.
With this definition in hand, we will now find all candidate departing paths.

\subsection{Finding all candidate departing paths}

In the previous section,  we added some weighted edges in the graph when we recurse. Thus, there might be two types of edges in the graph -- {\em weighted}  and {\em unweighted}. The weighted edges represent paths in the graph $\GG$, and the unweighted edges are present in $\GG$. $\GG$ contains only unweighted edges. However, the graph at an internal node of $\TTT$, say graph $\GAT$, may have weighted as well as unweighted edges.

In the ensuing discussion, let $\GAT$ be the graph processed by our algorithm at some internal node of $\TTT$. In the graph $\GAT$, we will assume that there is a source $s$. Let $\overline{\GG}$ be the parent of  $\GAT$.  In our algorithm, we partition $\overline{\GG}$ into two disjoint graphs and then recurse on it. If  $\GAT$ is the left child of $\overline{\GG}$, then it includes a set  $X$  of weighted edges added by us in  \Cref{subsec:MM}. Similarly, if $\GAT$ is the right child of $\overline{\GG}$ then it includes a set  $Y$  of weighted edges added by us in  \Cref{subsec:NN}. Using the separator lemma, we will find the vertex $r$ that  partitions $\BFS_{\GAT}(s)$ . Also, the primary path $\PP = sr$.

We now give an overview of our method to find candidate departing paths in $\GAT$. To this end, we will use the main idea in the paper of Gupta and Singh \cite{GuptaS18}.  In \cite{GuptaS18}, the authors proved \Cref{lem:guptaaditi}.
Though they did not mention it, the paths in the \Cref{lem:guptaaditi} look very much like the departing paths. Indeed, that lemma is more general than what the authors originally intended it to be. The authors show the above lemma for a specific vertex $t_s$, but a careful reading of the paper suggests that the above lemma is true for any vertex on the $st$ path. We now generalise this lemma.
However, there is one problem. The above result holds only for an unweighted graph, whereas $\GAT$ is weighted. Thus, we cannot prove the above lemma as it is. However, we will prove the following weaker lemma:

\begin{lemma}
\label{lem:guptaadapted}
Let $\GAT$ be the graph at  an internal node in the binary tree $\TTT$. Let $s$ be the source of $\GAT$. For a destination $t \in \GAT$, let $p$ be any vertex  on $st$ path. The number of replacement paths from $s$ to $t$ that avoid edges on $sp$ path  and also avoid  $p$ is $O(\sqrt n)$.
\end{lemma}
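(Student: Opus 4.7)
My plan is to prove \Cref{lem:guptaadapted} by adapting, rather than directly invoking, the argument Gupta and Singh use for \Cref{lem:guptaaditi}. The adaptation is needed because $\GAT$ carries the weighted edges from the sets $X$ and $Y$ introduced in \Cref{subsec:MM} and \Cref{subsec:NN}, whereas Gupta and Singh work in an unweighted graph. The outer logic of their argument nevertheless relies on two ingredients that transfer to $\GAT$: uniqueness of replacement paths (enforced by the lex-smallest tiebreak of \Cref{def:replacementpath}) and a counting step over vertices of the ambient graph. Since the vertex set of $\GAT$ is a subset of the vertex set of $\GG$, the counting step should still yield an $O(\sqrt n)$ bound so long as the weighted edges of $\GAT$ do not destroy the structural properties that Gupta and Singh exploit.

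To set things up I would let $R_1, \ldots, R_k$ be distinct replacement paths from $s$ to $t$ in $\GAT$ avoiding some edge of the $sp$-path and avoiding $p$, indexed so that $R_i$ avoids an edge $e_i \in sp$, with $e_1, \ldots, e_k$ appearing along $sp$ in order from $s$ to $p$. Let $u_i$ be the detour point of $R_i$ (above $e_i$) and $w_i$ the vertex where $R_i$ rejoins $st$ (with $w_i = t$ if it never rejoins); since $R_i$ avoids $p$, $w_i$ is either $t$ or lies strictly below $p$ on $st$. Following Gupta--Singh, I would associate to each $R_i$ a distinct witness vertex on the detour $D_i := R_i[u_i, w_i]$, chosen to lie off $sp \cup pt$. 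An exchange argument built on the lex-smallest tiebreak of \Cref{def:replacementpath} forces distinct $R_i$'s to pick distinct witnesses: otherwise, splicing a prefix of one with the suffix of another would violate the replacement-path definition for one of them. The Gupta--Singh pigeonhole on detour lengths then delivers the $O(\sqrt n)$ bound.

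The main obstacle, and the reason this lemma is explicitly labelled \emph{weaker} than \Cref{lem:guptaaditi}, is that a weighted edge of $\GAT$ is indivisible: a detour crossing such an edge contributes no interior $\GAT$-vertex on which to place a witness, even though that edge may represent a long path in $\GG$. I expect the resolution to hinge on the structural fact that every weighted edge of $\GAT$ is incident to a source-side vertex --- an ancestor's separator $r$ or synthetic source $s_\NN$ --- so weighted edges cannot sit arbitrarily deep inside a detour; combined with an induction on the depth of $\GAT$ in $\TTT$, this should recover enough vertices of $\GAT$ on each detour to run the Gupta--Singh counting, now with $n$ replaced by $|\VV_\GAT| \leq n$. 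Making this reduction clean in the presence of mixed weighted and unweighted edges is the delicate step I anticipate.
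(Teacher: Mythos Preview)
You correctly identify the one real obstacle: $\GAT$ carries weighted edges, and a weighted edge contributes no interior $\GAT$-vertices to a detour, so the Gupta--Singh pigeonhole on detour vertices cannot be run inside $\GAT$ as-is. But your proposed resolution --- exploiting that weighted edges are incident to a source-side vertex, and inducting on the depth of $\GAT$ in $\TTT$ --- is not the move the paper makes, and you yourself flag it as a ``delicate step'' you do not carry out. As stated, it is a plan with a hole at exactly the hard point.

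The paper's resolution is far simpler and sidesteps the delicacy entirely. Every weighted edge of $\GAT$ was introduced (in \Cref{subsec:MM} or \Cref{subsec:NN}) as a surrogate for an actual path in the parent graph $\overline{\GG}$; recursively \emph{expanding} each weighted edge back into the unweighted path it represents yields an honest unweighted graph on at most $n$ vertices (the vertices of $\GG$). Crucially, the shortest $st$ path and the replacement paths in question are preserved under this expansion, because the weighted edges had exactly the lengths of the paths they encode. Now one is in an unweighted graph with $\le n$ vertices and can run the Gupta--Singh argument verbatim: the $\UNIQUE(R)$ portions of long detours are disjoint, so at most $\sqrt n$ paths have $|\UNIQUE(R)|\ge\sqrt n$, and once some $R$ has $|\UNIQUE(R)|<\sqrt n$ a short computation using \Cref{lem:lessthanR} bounds $|(<R)|$ by $O(\sqrt n)$. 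This is precisely why the bound is $O(\sqrt n)$ rather than $O(\sqrt{n_{\GAT}})$.

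In short: your diagnosis of the obstacle is right, but the fix is not structural analysis of where weighted edges sit nor an induction on $\TTT$; it is the one-line observation that weighted edges can be unfurled back to unweighted paths in $\GG$, after which Gupta--Singh applies directly.
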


Some discussion is in order. In an ideal case, the number of replacement paths that avoid edges on $sp$ and also avoid $p$ should have been $\OO(\sqrt {n_{\GAT}})$. This result would have been similar to  \Cref{lem:guptaaditi}. Unfortunately, we cannot prove this result as $\GAT$ is weighted. However, if we just expand the weighted edge in the graph $\GAT$, then  we will get an unweighted graph. By expanding, we mean that for each weighted edge, add the path that represents that weighted edge. However, this process  may increase the number of vertices in the graph to $n$. Now, we can adapt   \Cref{lem:guptaaditi}. For an unweighted graph with $n$ vertices, this lemma guarantees that the number of replacement paths avoiding $p$ will be at most $\sqrt n$. Indeed, this is our result in  \Cref{lem:guptaadapted}. We prove this lemma in \Cref{adapted} as it is an extension of the proof in \cite{GuptaS18}.

 \Cref{lem:guptaadapted} implies that there are only $O(\sqrt n)$ replacement paths that have some special properties which are similar to the properties of departing paths.
So, our plan of action is as follows:

\begin{enumerate}
    \item Show that there are only $O(\sqrt n)$ candidate departing paths to $t$ in $\GAT$. This will be done by showing the similarity between the replacement paths in  \Cref{lem:guptaadapted} and candidate departing paths.

    \item Show that we can find the lengths of all the candidate departing paths in $\GAT$ in  $O(m_{\GAT} \sqrt n)$ time. Additionally, we show that we can store the lengths	 in a compact data structure of size $O(n_{\GAT}\sqrt n)$.  Given any query $\QU(s,t,e)$ to this data-structure such that $s,t \in \GAT$ and $e$ is on the primary path of $\GAT$, we can find the length of  corresponding candidate departing path in $\TL(1)$ time.
\end{enumerate}

This completes the overview of our algorithm for finding candidate departing paths.

\subsection{Similarity between candidate departing paths and replacement paths in  \Cref{lem:guptaadapted}}

 Let us first prove some simple results that will help us prove this section's main idea.
\begin{lemma}
\label{lem:departingproperty}
Let $t \in \GT$ and $p = \LCA(t,r)$. All the candidate departing paths from $s$ to $t$ avoiding edges on $sp$ path also avoid $p$. For any edge $e$ on $pr$ path , $st \DIA e = st$. 
\end{lemma}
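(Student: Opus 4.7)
The plan is to derive both claims directly from the structure of $\SPT_{\GAT}(s)$ together with the definition of a departing path.

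First I would locate $p$ precisely. Since $r \in \PP$ and $p = \LCA(t,r)$ is an ancestor of $r$ in $\SPT_{\GAT}(s)$, the vertex $p$ must lie on the primary path $\PP$. Moreover, $p$ is an ancestor of $t$, so the $sp$ segment of $\PP$ is a prefix of the $st$ path in $\SPT_{\GAT}(s)$; the $st$ path then leaves $\PP$ at $p$ into a subtree hanging off $\PP$, ends at $p$ in the case $t = p$, or continues below $r$ in the case $p = r$. In all three cases, the $pr$ segment of $\PP$ shares no edge with $st$.

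For the first claim, fix any edge $e$ on the $sp$ segment and let $R$ be the candidate departing path for $e$. Since $e$ lies on $sp$, the vertex $p$ is a vertex of $\PP$ strictly below $e$ on $\PP$. By definition a departing path uses no vertex of $\PP$ after $e$, so $R$ avoids $p$, which is exactly what is claimed. For the second claim, fix any edge $e$ on the $pr$ segment. By the paragraph above, $e \notin st$, so $st$ itself is already a path from $s$ to $t$ avoiding $e$; combined with the trivial lower bound $|st \DIA e| \ge |st|$, this gives $|st \DIA e| = |st|$, and the tie-breaking of \Cref{def:replacementpath} forces $st \DIA e$ to coincide with $st$.

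The proof is essentially structural bookkeeping once $p$ is pinned down on $\PP$, so no deep obstacle stands in the way. The only care needed is for the degenerate cases: if $p = s$, the first claim is vacuous since $sp$ contains no edge, and if $p = r$, the $pr$ segment is empty so the second claim is vacuous. These cases require no separate argument.
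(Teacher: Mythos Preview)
Your proposal is correct and follows essentially the same approach as the paper's own proof: both argue that a departing path, by definition, touches no vertex of $\PP$ after $e$ and hence misses $p$, and that the $st$ path contains no edge of the $pr$ segment so it survives any such deletion. Your version is slightly more explicit in pinning down $p$ on $\PP$ and in noting the degenerate cases $p=s$ and $p=r$, but the substance is identical.
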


\begin{proof}
Let $R = st  \DIA e$ where $e \in sp$ and $R$ is departing.  The detour of $R$ must start above $e$ on $\PP$.  Since $R$ is departing, it can not merge with the path $\PP$ again. So, it avoids $p$ also.

The $st$ path passes through $p$ and does not use any vertex of $\PP$ below $p$. So, removing any edge on $pr$ path  does not disturb $st$ path. So, for any edge $e \in pr$ , $st \DIA e = st$.

\end{proof}

We now show that the candidate departing paths and the replacement paths in  \Cref{lem:guptaadapted} have the same property.

\begin{lemma}
\label{lem:numberofdeparted}
For any vertex $t \in \GAT$, there are $\OO(\sqrt{n})$ candidate departing paths to $t$.
\end{lemma}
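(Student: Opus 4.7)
The plan is to reduce the counting of candidate departing paths to the count supplied by \Cref{lem:guptaadapted}, using $p := \LCA(t,r)$ as the distinguished vertex on the primary path $\PP$. I would begin by splitting the edges of $\PP$ into those lying on $sp$ and those lying on $pr$. By the second clause of \Cref{lem:departingproperty}, every edge $e \in pr$ satisfies $st \DIA e = st$, so all edges on $pr$ share one and the same candidate departing path, namely the tree path $st$ itself. This segment therefore contributes only $O(1)$ to the count, and I can concentrate on edges in $sp$.

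For edges $e \in sp$, the first clause of \Cref{lem:departingproperty} guarantees that every candidate departing path $P_e$ avoids the vertex $p$. The heart of the argument is to set up an injection from the set of distinct candidate departing paths $\{P_e : e \in sp\}$ into the set of replacement paths enumerated by \Cref{lem:guptaadapted}, i.e., replacement paths $st \DIA e'$ for $e' \in sp$ that avoid $p$. Concretely, given such a candidate departing path $P$ with detour point $x$ on $sp$ (necessarily strictly above $p$, since $P$ avoids $p$), I would let $e^\star$ be the edge of $\PP$ immediately below $x$ and prove that $P = st \DIA e^\star$. Any candidate strictly shorter than $P$ in $\GAT \setminus \{e^\star\}$ would have to be jumping, and I would rule this out by observing that such a jumping alternative either passes through $p$ -- which I would exclude by exhibiting a departing competitor to $P$ of no greater length, contradicting the optimality of $P$ among departing paths -- or else re-enters $\PP$ strictly between $e^\star$ and $p$ in a way that again contradicts the minimality of $P$ within the departing class; ties are handled via the lexicographic rule in \Cref{def:replacementpath}.

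Once this identification is in hand, \Cref{lem:guptaadapted} immediately bounds the image of the injection by $O(\sqrt n)$, and adding the single path inherited from the $pr$ segment gives the claimed $O(\sqrt n)$ bound on the total number of candidate departing paths to $t$.

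The main obstacle, and where the bulk of the work sits, is the identification $P = st \DIA e^\star$. Candidate departing paths are minimal only within the class of departing paths, whereas the replacement paths counted by \Cref{lem:guptaadapted} are minimal among all $s$--$t$ paths in $\GAT \setminus \{e^\star\}$, including jumping alternatives. Excluding a strictly shorter jumping alternative requires a careful structural case analysis that exploits both the fact that $P$ avoids $p$ and the defining property $p = \LCA(t,r)$, which tightly constrains where any jumping alternative may re-enter $\PP$; once this case analysis is pushed through, the injection -- and hence the $O(\sqrt n)$ bound -- follows.
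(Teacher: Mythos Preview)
Your overall plan --- set $p=\LCA(t,r)$, use \Cref{lem:departingproperty} to handle the $pr$ segment with the single path $st$ and to see that every candidate departing path for an edge on $sp$ avoids $p$, then invoke \Cref{lem:guptaadapted} --- is exactly the paper's proof. The paper stops right there; it does not construct any injection into the class of global replacement paths.

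The injection you propose, namely $P = st \DIA e^\star$ for the edge $e^\star$ immediately below $\DP(P)$, cannot be carried out because that identity is false in general. Consider a primary path $s,a,b,c,p,r$ with $t$ adjacent to $p$, a two-edge bypass $a,d,c$, and a separate length-$10$ path from $a$ to $t$ avoiding $\PP$ entirely. For $e^\star=(a,b)$ the true replacement path is the jumping path $s,a,d,c,p,t$ of length $5$, whereas the candidate departing path $P$ has length $11$; there is no departing path of length at most $5$, so your proposed exclusion of the jumping alternative (``exhibit a departing competitor of no greater length'') fails outright, and no case analysis can rescue the equality $P=st\DIA e^\star$. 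The discrepancy you sensed between candidate departing paths and replacement paths avoiding $p$ is real, but the paper's intended resolution is not an injection: it is that the \emph{proof} of \Cref{lem:guptaadapted} in the appendix applies to candidate departing paths directly (they have pairwise distinct integer lengths by \Cref{lem:reldeparting1}, and the detour/\textsc{Unique} counting goes through --- indeed the appendix already invokes \Cref{lem:reldeparting3}, a statement about candidate departing paths, inside that argument). So \Cref{lem:guptaadapted} should be read as bounding this class directly, not via identification with global replacement paths.
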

\begin{proof}
 \label{cdpnumbers}
 Remember that the candidate departing path avoids edges on path $\PP$ only. Let $p = \LCA(r,t)$. Using  \Cref{lem:departingproperty}, $st$ is the candidate departing path avoiding all edges in $pr$ path. Again, by  \Cref{lem:departingproperty}, all the candidate departing paths avoiding edges in $sp$ path  avoid $p$ too. But by  \Cref{lem:guptaadapted}, the number of such paths is $O(\sqrt n)$. This implies that there are $O(\sqrt n)$ candidate departing paths to $t$.

\end{proof}

Given the above lemma, we need to store $O(\sqrt n)$ candidate departing paths to $t$ in $\GAT$. Before designing an algorithm to find candidate departing paths, we first see how we plan to store these paths in a compact data structure.

\subsection{The data-structure at each node of $\TT$}
Let us first discuss a small technical detail that may be perceived as a problem but is not. Our graph $\GAT$ is weighted. It stands to reason that even the primary path $\PP$ in $\GAT$ maybe weighted. Since candidate departing paths are only for the faults on the primary path, it may not be clear what happens if the edge on the primary path is weighted. To this end, we show that on any $st$ path in $\GAT$, all edges except may be the first one, are unweighted.

\begin{lemma}
\label{lem:dsatT}
For $t \in \GAT$, except may be the first edge of $(st)_\GAT$, all other edges of $(st)_\GAT$ are unweighted.
\end{lemma}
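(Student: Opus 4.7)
The plan is to proceed by structural induction on the depth of $\GAT$ in $\TTT$, with base case $\GAT = \GG$ trivial since the root graph has no weighted edges. I would strengthen the claim to: one can choose the (possibly non-unique) shortest path $(st)_\GAT$ so that every edge except possibly the first is unweighted, and carry alongside the invariant $|(st)_\GAT| = |s^\star t|_\GG$, where $s^\star$ is the vertex of the original graph $\GG$ that the current source $s$ of $\GAT$ represents ($s^\star = s$ in the left-child case; in the right-child case $s = s_\NN$ is fresh and $s^\star$ is the source of $\overline{\GG}$, tracked back inductively to a vertex of $\GG$). Since $\GG$ is unweighted, the invariant forces the $\GAT$-shortest-path length to be realizable by an essentially unweighted construction.

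For the inductive step, let $\overline{\GG}$ be the parent of $\GAT$ and split on whether $\GAT$ is the left or right child. In the right-child case $\GAT = \overline{\GG}_\NN \cup Y$, the fresh source $s_\NN$ is incident only to weighted edges of $Y$, so the first edge of $(s_\NN\, t)_\GAT$ is necessarily weighted. Applying the inductive hypothesis to $\overline{\GG}$ yields a shortest path $P_{\overline{\GG}}$ from $s^\star$ to $t$ in $\overline{\GG}$ whose edges are all unweighted after the first; choosing $v$ to be the vertex at which $P_{\overline{\GG}}$ first enters $\VV_\NN$, I take the weighted edge $(s_\NN, v) \in Y$ to simulate the prefix up to $v$ and follow the unweighted suffix of $P_{\overline{\GG}}$ from $v$ to $t$ (which lies in $\NN \subseteq \GN$). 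Optimality and the invariant follow from two-sided bounds: every path in $\GAT$ unfolds to a real $\GG$-path of the same length (since each weighted edge's weight is itself a shortest-path length in an ancestor graph), and the explicit construction realizes exactly $|s^\star t|_\GG$. In the left-child case $\GAT = \overline{\GG}_\MM \cup X$, the source $s$ is inherited; the inductively chosen shortest path $(st)_{\overline{\GG}}$ already has the claimed form, and the separator-lemma decomposition of $\SPT_{\overline{\GG}}(s)$ ensures it lies inside $\MM \subseteq \GM \subseteq \GAT$ whenever $t \in \VV_\MM$, so it is a valid path in $\GAT$, and by the inductive invariant it is a shortest one there as well.

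The main obstacle I anticipate is that at deep recursion levels $\GAT$ can accumulate weighted edges whose endpoints are separators of several distinct ancestor graphs (not only the current source or the current separator), so a direct vertex-by-vertex argument about how multiple ``shortcut'' edges might interact in the middle of a shortest path would be cumbersome. The plan to sidestep this is precisely to propagate the invariant $|(st)_\GAT| = |s^\star t|_\GG$ alongside the main claim: because the right-hand side lives in the unweighted graph $\GG$, the existence of an essentially unweighted realizer in $\GAT$ falls out of the inductive construction above, without requiring any separate case analysis of the inherited weighted edges.
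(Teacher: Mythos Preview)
Your proposal is correct and follows essentially the same inductive argument as the paper: induction on the depth in $\TTT$, splitting into left-child and right-child cases, with the right-child case taking a single weighted first edge to the separator $\overline{r}$ (which coincides with your vertex $v$, the first point where the tree path enters $\VV_\NN$) followed by the unweighted tree path $(\overline{r}t)_{\overline{\GG}}$ inside $\GN$. Your added invariant $|(st)_\GAT| = |s^\star t|_\GG$ simply makes explicit the optimality of the constructed path, which the paper's proof asserts without further justification.
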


\begin{proof}

   \label{structure}
  We will prove using induction on the nodes of $\TT$. In the root of $\TTT$, we have the graph $\GG$. Clearly, the graph $\GG$ satisfies the property of the lemma. Let us assume using induction that the parent of the graph $\GAT$ satisfies the property of the lemma. Let $\overline{\GG}$ be the parent of $\GAT$. Let $\overline{s}$ be the source in $\overline{\GG}$. Thus, in the graph $\overline{\GG}$, using the separator lemma, we find a $\overline{r}$ that divides $\overline{\GG}$ into two parts. The path from $\overline{s}$ to $\overline{r}$, say $\overline{\PP}$ is the primary path. By induction, only the first edge of the primary path may be weighted.  There are two cases:

\begin{enumerate}
\item $\GAT$ is the left child of $\overline{\GG}$.

In this case, the source of $\GAT$ is also $\overline{s}$.
For each $t \in \GAT$, it can be observed  that the path $(\overline{s}t)_{\overline{\GG}} = (\overline{s}t)_\GAT$. Using induction hypothesis, since $(\overline{s}t)_{\overline{\GG}}$ satisfies the statement of lemma, so does $(\overline{s}t)_\GAT$.
\item $\GAT$ is the right child of $\overline{\GG}$.

There are two cases. When $\overline{s} = \overline{r}$, then we fall back in point(1). So, let us look at the case when $\overline{s} \neq \overline{r}$.

For a $t \in \GAT$, by construction, $(\overline{s}t)_{\overline{\GG}}$ passes through $\overline{r}$. Thus, $(\overline{s}t)_{\overline{\GG}} = (\bar{s}\bar{r})_{\overline{\GG}} + (\overline{r}t)_{\overline{\GG}}$. In $\GAT$, we add a new source $s$. Also, we add a weighted edge  from $s$ to $\overline{r}$ in $\GAT$.  The weight of this edge is $|\bar{s}\bar{r} \DIA \GAT|_{\overline{\GG}}$. Also, $(\overline{r}t)_{\overline{\GG}} = (\overline{r}t)_\GAT$. This implies that there is path in $\GAT$ from $s$ to $t$, $(s,\overline{r})+ (\overline{r}t)_\GAT$. By induction, we claim that on this path except $(s,\overline{r})$, all the edges are unweighted.
\end{enumerate}

\end{proof}

Using the above lemma, all except the first edge of the primary path are unweighted. The weighted edges represent edges for which we have already found candidate departing paths at the parent or an ancestor of $\GAT$. Thus, we will only find candidate departing paths for unweighted edges in $\PP$.  In the ensuing discussion, whenever we mention a path avoiding an edge on the primary path, it will always refer to an unweighted edge.

 We now prove some simple lemmas that will help us build data structures for candidate departing paths.
\begin{lemma}
\label{lem:reldeparting1}
Let $R$ and $R'$ be two different candidate departing paths from $s$ to $t$ avoiding edges $e$ and $e'$  respectively on the path $\PP$. If $e$ lies above  $ e'$ on $\PP$, then $|R|>|R'|$.
\end{lemma}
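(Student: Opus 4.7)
The plan is to argue in two steps: first that $R$ itself qualifies as a departing path avoiding $e'$, which immediately gives $|R| \ge |R'|$, and then to upgrade this to strict inequality using the uniqueness imposed by the tie-breaking rule.

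For the first step, let $u = \DP(R)$ be the detour point of $R$. Since $R$ is a departing path avoiding $e$, the vertex $u$ lies strictly above $e$ on $\PP$, and by the definition of departing path, $R$ never re-enters $\PP$ after $u$. Because $e$ lies above $e'$ on $\PP$, $u$ also lies strictly above $e'$, so the prefix of $R$ from $s$ to $u$ (a subpath of $\PP$) avoids $e'$, and the detour portion of $R$ contains no vertex of $\PP$ at or below $u$ and hence certainly does not contain the edge $e' \in \PP$. Therefore $R$ is a valid departing path from $s$ to $t$ avoiding $e'$. Since $R'$ is, by definition, a minimum-length departing path avoiding $e'$, we get $|R| \ge |R'|$.

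For the second step, suppose for contradiction that $|R| = |R'|$. Then $R$ is also a minimum-length departing path from $s$ to $t$ avoiding $e'$. By the definition of candidate departing paths, when several departing paths avoiding the same edge share the minimum length, ties are broken using \Cref{def:replacementpath}, which specifies a unique winner (the lexicographically smallest one with the closest-to-$s$ divergence point). Hence there is exactly one candidate departing path for $e'$, namely $R'$, and so $R = R'$. This contradicts the hypothesis $R \ne R'$, so we must have $|R| > |R'|$.

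The argument is essentially a monotonicity observation, so I do not expect any real obstacle; the only subtle point is making sure the tie-breaking clause of the definition of candidate departing path is strong enough to force uniqueness among minimum-length departing paths avoiding a fixed edge, which it is by appealing directly to \Cref{def:replacementpath}. No case analysis on whether $e'$ is on the detour portion of $R$ is needed, because a departing path has no vertices of $\PP$ below its detour point at all.
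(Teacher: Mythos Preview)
Your approach is the same as the paper's: first show that $R$, being a departing path whose detour starts above $e$ and never returns to $\PP$, also avoids $e'$; then use the tie-breaking clause to exclude $|R|=|R'|$. The first step is carried out correctly.

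The second step has a small but real gap. From ``$R$ is a minimum-length departing path avoiding $e'$'' together with ``the tie-breaking rule singles out a unique candidate departing path for $e'$, namely $R'$'' you conclude $R=R'$. That inference is invalid: uniqueness of the \emph{winner} of the tie-break does not mean there is only one minimum-length competitor. It is perfectly consistent with what you wrote that $R$ and $R'$ are distinct minimum-length departing paths avoiding $e'$ and that $R'$ simply beats $R$ under \Cref{def:replacementpath}. What is actually needed is to argue that $R$ would \emph{win} the tie-break for $e'$. This is easy once stated: if $\DP(R')$ lies at or above $e$ then $R'$ also avoids $e$, so $R$ and $R'$ are both competitors for $e$ and for $e'$ under the identical tie-breaking rule and cannot each beat the other; if $\DP(R')$ lies strictly below $e$ then $\DP(R)$ (which is above $e$) is strictly closer to $s$, so $R$ beats $R'$ for $e'$, contradicting the choice of $R'$. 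The paper's proof states the needed conclusion (``$R$ must be the candidate departing path avoiding $e'$'') just as tersely, but at least it asserts the right thing; your sentence infers equality from uniqueness, which is the wrong logical move.
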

\begin{proof}
The detour of the candidate departing $R$ starts before $e$, and once it departs, it does not merge with $\PP$ again.  As $e$ lies above  $e'$ on $\PP$, $R$  also avoids $e'$. If $|R| \leq |R'|$, then by \Cref{def:replacementpath}, $R$ must be  the  candidate departing path avoiding $e'$, leading to a contradiction. So, it must be the case that $|R|>|R'|$.
\end{proof}

\begin{lemma}
\label{lem:reldeparting2}
Let $R$ be a candidate departing path. Let $yz$ be the maximal subpath of $\PP$ such that $R$ is the candidate departing path for edges in $yz$. Then $\DP(R) = y$.	
\end{lemma}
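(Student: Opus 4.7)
The plan is to argue by contradiction: suppose the detour point $\DP(R)$ lies strictly above $y$ on $\PP$, and derive that $R$ would then also be the candidate departing path for the edge of $\PP$ sitting immediately above $y$, contradicting the maximality of the interval $yz$.

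First, I would observe that since $R$ avoids every edge of $yz$, and in particular the top edge of $yz$ (the one with upper endpoint $y$), the detour of $R$ must begin at a vertex of $\PP$ that is at $y$ or above $y$. So either $\DP(R) = y$ and we are done, or $\DP(R) = u$ for some $u$ strictly above $y$ on $\PP$. In the latter case, note that if $y = s$ this is already impossible, so we may assume $y \neq s$ and hence there is a well-defined edge $e_y$ of $\PP$ with lower endpoint $y$; let $y''$ be its upper endpoint, so $u$ lies at or above $y''$. In particular, $R$ itself is a departing path avoiding $e_y$.

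Next, let $P^{*}$ be the candidate departing path for $e_y$. Because $P^{*}$ avoids $e_y$, its detour point lies at or above $y''$, and therefore $P^{*}$ simultaneously avoids every edge of $yz$. So $P^{*}$ is a departing path avoiding any $e \in yz$, and $R$ is a departing path avoiding $e_y$. Minimality of $R$ as the candidate for $e \in yz$ gives $|R| \leq |P^{*}|$, while minimality of $P^{*}$ as the candidate for $e_y$ gives $|P^{*}| \leq |R|$, so $|R| = |P^{*}|$. Now invoking the tie-breaking rules inherited from \Cref{def:replacementpath}, the candidate must take the divergence point as close to $s$ as possible and then be lexicographically smallest. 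Applying the divergence-point rule to both $R$ (as candidate for an edge in $yz$) and $P^{*}$ (as candidate for $e_y$), and noting that each path is admissible for both constraint sets, forces $\DP(R) = \DP(P^{*})$. Applying the lexicographic rule symmetrically then forces $R = P^{*}$.

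But $R = P^{*}$ says $R$ is the candidate departing path for $e_y$, and $e_y$ is an edge of $\PP$ strictly above $y$, contradicting the maximality of the interval $yz$ on which $R$ is the candidate. Hence the supposition fails and $\DP(R) = y$.

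The only subtle step I anticipate is the divergence/lex tie-breaking argument: one has to verify that both $R$ and $P^{*}$ are \emph{admissible} competitors in the tie-break for both $e_y$ and for any $e \in yz$, so that the ``as close to $s$ as possible'' and ``lexicographically smallest'' conditions can be applied symmetrically. This reduces to the geometric fact that a departing path whose divergence point sits at or above $y''$ automatically avoids every edge of $\PP$ below $y''$, which is immediate from the definition of a departing path (it never merges back to $\PP$ after diverging).
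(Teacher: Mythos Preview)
Your proof is correct and takes essentially the same approach as the paper's: both assume $\DP(R)$ lies strictly above $y$ and derive that $R$ must then be the candidate departing path for the edge of $\PP$ immediately above $y$, contradicting the maximality of $yz$. The only cosmetic difference is that the paper appeals to \Cref{lem:reldeparting1} to force the contradiction via a strict length inequality, whereas you establish $|R|=|P^{*}|$ directly and invoke the tie-breaking rules of \Cref{def:replacementpath} to conclude $R=P^{*}$.
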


\begin{proof}

\label{app:reldeparting2}
Since $R$ avoids the edges of $yz$, its detour starts at or above $y$ on the path $\PP$. Let us assume that $\DP(R)$ lies above $y$ on path $\PP$, say at $x$. This implies that $R$ also avoids all edges on $xy$ path. But $R$ is not the candidate departing path for edges in the path $xy$. Let $R'$ be the candidate departing path for an edge, say $e' \in xy$. Now, for an edge $e \in yz$, $R$ is the replacement path and $e'$ lies above $e$ on $\PP$. Using  \Cref{lem:reldeparting1}, $|R'| > |R|$. But in that case, $R$ should be the replacement path avoiding  $e'$ too. This leads to a contradiction. Thus, our assumption that $\DP(R)$ lies above $y$ is false and $\DP(R) = y$.
\end{proof}

\label{app:reldeparting3}

The above lemma states that if $R$ avoids edges in $yz$, then the detour of $R$ necessarily starts from $y$.
  We will now prove the contrapositive of the  \Cref{lem:reldeparting1}.

  \begin{lemma}
  \label{lem:reldeparting3}
Let $R$ and $R'$ be two different candidate departing paths from $s$ to $t$. If $|R|>|R'|$, then $\DP(R)$ lies above $\DP(R')$ on $\PP$.
\end{lemma}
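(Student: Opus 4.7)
My plan is to argue by contradiction: assume $|R| > |R'|$ but that $\DP(R)$ does \emph{not} lie strictly above $\DP(R')$ on $\PP$, and derive a contradiction. There are two sub-cases to rule out, and both exploit the simple observation that since $R$ and $R'$ are candidate departing paths, each one avoids \emph{every} edge of $\PP$ from its detour point downward (because a departing path, by definition, never returns to $\PP$ after leaving it).

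First I would handle the case $\DP(R) = \DP(R')$. Call this common vertex $y$. By Lemma \ref{lem:reldeparting2} there are maximal intervals $[y,z]$ and $[y,z']$ of $\PP$ on which $R$ and $R'$, respectively, are candidate departing paths. Pick any edge $e$ in $[y, r]$ lying on $\PP$ below $y$ (for instance the first edge just below $y$): both $R$ and $R'$ avoid $e$ since each is departing with detour point $y$. Thus the candidate departing path for $e$ is the shorter of $R, R'$, with ties broken by Definition \ref{def:replacementpath}. But this unique winner must then be a candidate departing path for every edge of $[y,r]$ that the other also avoids, which means only one of $R, R'$ can ever be a candidate departing path with detour point $y$. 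Since $R \neq R'$, this contradicts the hypothesis that both are candidate departing paths.

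Next I would handle the case where $\DP(R)$ lies strictly below $\DP(R')$ on $\PP$; call these points $y$ and $y'$ respectively, with $y'$ above $y$. Again by Lemma \ref{lem:reldeparting2}, $R$ is the candidate departing path for some edge $e$ in the maximal interval $[y,z]$. Since $\DP(R') = y'$ lies above $y$, the departing path $R'$ avoids every edge of $\PP$ from $y'$ downward, and in particular avoids $e$. Therefore $R'$ is a valid candidate when we look for the minimum length departing path avoiding $e$, so by definition of candidate departing path $|R| \le |R'|$, contradicting $|R| > |R'|$.

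The only remaining possibility is that $\DP(R)$ lies strictly above $\DP(R')$ on $\PP$, which is what we wanted to show. I do not expect any real obstacle here: the whole proof is a clean case analysis riding on Lemma \ref{lem:reldeparting2}, which already pins the detour point of a candidate departing path to the top of its maximal interval of optimality; the slightly delicate part is just the tiebreaking in the equal-detour-point case, which is dispatched by invoking the lexicographic rule of Definition \ref{def:replacementpath}.
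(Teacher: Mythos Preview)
Your proposal is correct and follows essentially the same contradiction argument as the paper: both rely on the observation that a departing path avoids every edge of $\PP$ below its detour point, so if $\DP(R')$ were at or above $\DP(R)$ then $R'$ would avoid every edge for which $R$ is the candidate, forcing $|R|\le |R'|$. The paper compresses this into a single case (assuming $\DP(R')$ lies above $\DP(R)$ and deriving the contradiction directly), whereas you explicitly separate out and dispatch the equal-detour-point case via the uniqueness/tie-breaking rule; this extra care is fine but not strictly necessary, since the paper's argument already covers equality implicitly (when $\DP(R)=\DP(R')$, $R'$ still avoids all edges that $R$ avoids and $|R'|<|R|$ immediately rules $R$ out as a candidate).
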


\begin{proof}
Since $R'$ does not merge with $\PP$, $R'$ avoids all edges below $\DP(R')$ in $\PP$. Assume for contradiction, that $\DP(R')$ lies above $\DP(R)$.  This will imply that $R'$ also avoids all the edges avoided by $R$. But $|R| > |R'|$. Then, $R'$ should be the  candidate departing path for all the edges avoided by $R$ -- a contradiction. Thus, $\DP(R)$ lies above $\DP(R')$ on $\PP$.
\end{proof}

 We will now use the above lemmas and our deduction to build a compact data-structure for all candidate departing paths. To this end, we will store an array $\DEP(t)$ for each $t \in \GAT$. $\DEP(t)$ will store candidate departing paths from $s$ to $t$  in increasing order of their lengths. By  \Cref{lem:numberofdeparted}, there are  $\OO(\sqrt n)$ such paths. let us denote them by $R_1, R_2, \dots, R_k$ where $k = O(\sqrt n)$.  For any two consecutive candidate departing paths $R_i$ and $R_{i+1}$, using  \Cref{lem:reldeparting3} and \Cref{lem:reldeparting2}, we claim that $R_{i+1}$ is the candidate departing path avoiding edges in $[\DP(R_{i+1}),\DP(R_{i})]$ on the primary path $\PP$.
Since the size of $\DEP(t) = O(\sqrt n)$, the total size of $\DEP$ data-structure is bounded by $\OO(n_\GAT \sqrt n)$.

\begin{lemma}
$\sum_{t \in \GAT} \text{size of $\DEP(t)$} = O(n_\GAT\sqrt n)$
\end{lemma}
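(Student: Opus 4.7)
The plan is to apply the per-destination bound established in \Cref{lem:numberofdeparted} and sum over all vertices $t \in \GAT$. The per-destination bound does all the heavy lifting; the final inequality is then a one-line counting argument.

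First I would recall exactly what $\DEP(t)$ stores. By the construction described just above the statement, $\DEP(t)$ contains the list of candidate departing paths $R_1, R_2, \ldots, R_k$ sorted by increasing length, where each $R_i$ is stored together with its length (and, implicitly via adjacency in the sorted order, the maximal subpath $[\DP(R_i), \DP(R_{i-1})]$ of $\PP$ on which $R_i$ is the candidate departing path, as justified by \Cref{lem:reldeparting2} and \Cref{lem:reldeparting3}). Thus each entry uses $O(1)$ machine words, and the size of $\DEP(t)$ is proportional to $k$, the number of distinct candidate departing paths to $t$.

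Next I would invoke \Cref{lem:numberofdeparted}, which bounds $k = O(\sqrt{n})$ for every $t \in \GAT$. Summing over the $n_\GAT$ destinations gives
\[
\sum_{t \in \GAT} \text{size of } \DEP(t) \;=\; \sum_{t \in \GAT} O(\sqrt{n}) \;=\; O(n_\GAT \sqrt{n}),
\]
which is the claimed bound.

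The main (already-overcome) obstacle sits inside \Cref{lem:numberofdeparted}, which itself reduces to the adapted Gupta--Singh bound \Cref{lem:guptaadapted}: once the per-destination count of $O(\sqrt n)$ is in hand, no further structural argument about how candidate departing paths are distributed across different destinations is needed, because the bound is uniform in $t$. Hence the total-size statement follows by direct summation.
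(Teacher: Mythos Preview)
Your proposal is correct and follows exactly the same approach as the paper: the paper's justification is the single sentence immediately preceding the lemma, ``Since the size of $\DEP(t) = O(\sqrt n)$, the total size of $\DEP$ data-structure is bounded by $\OO(n_\GAT \sqrt n)$,'' which is precisely the per-destination bound from \Cref{lem:numberofdeparted} summed over all $t$.
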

\subsection{Finding and storing all candidate departing paths efficiently}
Let us first describe the setting that will be used throughout this section. At an internal node $\GAT$ of $\TT$, we are planning to find all candidate departing paths from the source $s$. To this end, we will find a vertex $r$ that will divide $\SPT_{\GAT}(s)$ roughly equally. Also, $\PP =sr$.

To find all candidate departing path,  we will build an auxiliary graph $G$ which we will build incrementally. The source in this graph is   $(s)$. All other vertices in $G$ are tuples of the form  $(v,|R|)$, where $v \in \GAT \setminus \PP$ and $R$ is a candidate departing path to $v$.   During initialization, we will add $(v,|sv|_{\GAT})$ in $G$ for each $v \in \GAT \setminus \PP$. Also, there will be an edge from $(s)$ to $(v,|sv|_{\GAT})$ with weight $|sv|_{\GAT}$. We will show the following property at the end of our analysis.

\begin{prop}
\label{prop1}
 Let $R$ be the candidate departing path to $v$ avoiding edge on the subpath $yz$ of $\PP$.  Then, there is a vertex $(v,|R|)$ in
$G$. Moreover the shortest path from $(s)$ to $(v,|R|)$ in the graph $G$ is of length
$|R|$, that is  $|R|= |(s)(v,|R|)|_{G}$
\end{prop}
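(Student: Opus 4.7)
The plan is to establish \Cref{prop1} by strong induction on the rank of the candidate departing path when the $O(\sqrt n)$ candidate departing paths to $v$ are sorted in increasing order of length as $R_1,R_2,\ldots,R_k$ (the bound $k=O(\sqrt n)$ coming from \Cref{lem:numberofdeparted}). At each stage I will show simultaneously that (a) a node $(v,|R_i|)$ is present in $G$ once the algorithm has finished processing paths of length at most $|R_i|$, and (b) the shortest path from $(s)$ to $(v,|R_i|)$ in $G$ has weight exactly $|R_i|$.

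For the base case, set $p=\LCA(r,v)$. By \Cref{lem:departingproperty}, $(sv)_\GAT$ is a departing path that avoids every edge of $pr$; since no departing path to $v$ can be shorter than the shortest $s$-to-$v$ path in $\GAT$, we must have $R_1=(sv)_\GAT$. The initialization inserts $(v,|sv|_\GAT)$ into $G$ together with an edge from $(s)$ of weight $|sv|_\GAT$, giving both (a) and (b). For the inductive step, let $y_{i+1}=\DP(R_{i+1})$. By \Cref{lem:reldeparting2}, $y_{i+1}$ is the top of the maximal interval of $\PP$ on which $R_{i+1}$ is the candidate, and by \Cref{lem:reldeparting3}, $y_{i+1}$ lies strictly above $\DP(R_i)$. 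Let $u$ be the first vertex of $R_{i+1}$ after $y_{i+1}$; then $u\notin\PP$ and
\[
R_{i+1}=(sy_{i+1})+(y_{i+1},u)+R_{i+1}[u,v],
\]
with the suffix $R_{i+1}[u,v]$ being a shortest $u$-to-$v$ path in $\GAT$ that avoids every edge of $\PP$ weakly above $y_{i+1}$ (and avoids $y_{i+1}$ itself).

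The upper bound half of (b) now comes from arguing that this suffix either coincides with the suffix of some already-discovered candidate departing path to $v$, to which the inductive hypothesis applies, or is computed directly by the construction when it inserts into $G$ an edge joining an already-present node (say some $(u,L')$) to the new node $(v,L'+\mathrm{tail})$ — in either case a walk in $G$ from $(s)$ to the fresh node $(v,|R_{i+1}|)$ of total weight $|R_{i+1}|$ is realised, establishing (a) and one direction of (b). The matching lower bound $|(s)(v,|R_{i+1}|)|_G\ge|R_{i+1}|$ requires a separate semantic invariant: every edge inserted into $G$ must correspond to a genuine feasible subpath of $\GAT$ avoiding the appropriate prefix of $\PP$, so that any walk in $G$ of weight $L'$ from $(s)$ to $(v,\cdot)$ translates into an actual departing path of length $L'$ to $v$ in $\GAT$; by definition of a candidate departing path this forces $L'\ge|R_{i+1}|$.

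I expect the main obstacle to lie in this semantic invariant rather than in the induction itself: one must pin down the edge-insertion rule for $G$ precisely enough that \emph{every} $G$-edge admits an interpretation as a valid detour in $\GAT$ respecting the correct set of forbidden primary-path edges, since otherwise a shortest path in $G$ could short-circuit the true candidate departing lengths by concatenating semantically incompatible pieces. Once this interpretability invariant is verified along with the decomposition above, both directions of (b) go through and the induction closes, yielding \Cref{prop1}.
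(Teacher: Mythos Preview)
Your overall plan has a genuine structural gap that prevents the induction from closing. The algorithm builds $G$ by extending candidate departing paths \emph{one edge at a time at the destination end}: when a node $(u,|R_u|)$ is settled, for each neighbour $w\notin\PP$ the path $R_u+(u,w)$ is pushed on the heap, and if it survives the $\DP$ test a node $(w,|R_u|+wt_{\GAT}(u,w))$ is created with a single incoming edge from $(u,|R_u|)$. Consequently the node $(v,|R_{i+1}|)$ can only appear because the algorithm had already created $(u',|R_{i+1}\setminus(u',v)|)$, where $u'$ is the \emph{second-to-last} vertex of $R_{i+1}$, not the first vertex after the detour point. Your decomposition $R_{i+1}=(sy_{i+1})+(y_{i+1},u)+R_{i+1}[u,v]$ at the top of the detour does not match this mechanism: there is no edge in $G$ that could carry the whole suffix $R_{i+1}[u,v]$, and there is no reason the suffix should coincide with any previously discovered path to $v$.

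The paper's proof (\Cref{lem:propertytrue}) fixes this by peeling off the \emph{last} edge $(u',v)$ and invoking the observation that a prefix of a candidate departing path is itself a candidate departing path, now to the vertex $u'$. This forces the induction to be global over \emph{all} candidate departing paths ordered by length, not per-vertex by rank: the predecessor $(u',|R_{i+1}\setminus(u',v)|)$ is a node for a different vertex $u'$, so an induction that only assumes the property for shorter CDPs to the same $v$ cannot reach it. Once the global induction is in place, the argument that the algorithm actually creates $(v,|R_{i+1}|)$ still requires a short case analysis (ruling out that some other $R'$ with the wrong $\DP$ is inserted instead), which the paper carries out explicitly. Your proposal does not supply either the correct decomposition or this case analysis, and the ``semantic invariant'' you flag is in fact the easy direction here---$G$ is a tree rooted at $(s)$, so each node has a unique path from $(s)$ whose weight is read off directly.
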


In  \Cref{lem:propertytrue}, we will show that \Cref{prop1} is true for all the nodes added during initialization. Also, we will create $\DEP(v)$ for each $v$ during intialization.
 We will store candidate departing paths in $\DEP(v)$ in increasing order of lengths. Given a departing path $R$, we will assume that we will store the following information about $R$ in $\DEP()$.
 \begin{enumerate}
\item The endpoints of $R$.
\item The weight of path $R$.
\item The last edge of $R$ and its weight.
\item $\DP(R)$.
\end{enumerate}

After initialization, we will run a variant of Dijkstra's algorithm in $G$.
To this end, we will construct a min-heap $\HH$ in which we will store all the departing paths that we have discovered at any point in the algorithm.
We use the first two points of \Cref{def:detour} to select the minimum  element from $\HH$,i.e., given two candidate departing paths $R$ and $R'$, $R$ is {\em smaller} than $R'$ if $|R| < |R'|$ or $|R| = |R'|$ and $\DEP(R)$ is closer to $s$ than $\DEP(R')$. If $\DEP(R) = \DEP(R')$, then we can break ties arbitrarily.

We now explain the adaptation of Dijkstra's algorithm. After initialization,for each $(v, |sv|)$, and for each neighbor neighbor $w$ of $v$, we add  the departing  path $sv + (v,w)$ in $\HH$ if $w \notin \PP$.
 Then, we go over all the elements of the heap till it is empty. Let us assume that $R$ is the minimum element of the heap and it ends at $v$ and $(u,v) \in \GAT$ is the last edge of $R$. This implies that $R$ was added in $\HH$ while processing a candidate departing path for $u$. Let this path be $R_u$. Thus, there is a node $(u,|R_u|)$ in $G$. We now need to decide whether $R$ is a candidate departing path for $v$.

To this end,  we look at the last candidate departing path added by us in $\DEP(v)$, let it be $Q$.  We then check if the $d_\GAT(s,\DP(R))$ is less than $d_\GAT(s,\DP(Q))$. If yes, then we have found a new candidate departing path to $v$ that avoids all edges in  $[\DP(R),\DP(Q)]$ of $\PP$. Thus, we will add the vertex $(v,|R|)$ in the graph with an edge of weight $wt_\GAT(u,v)$ from $(u,|R_u|)$. Also, for each neighbor $w$ of $v$, we will add the departing path $R+(v,w)$ to the heap if $w \notin \PP$. The pseudocode of the algorithm for finding all candidate departing path is given in algorithm 1.

\begin{algorithm}
\label{algo:cdpalgo}
\caption{Algorithm to construct $\DEP()$}

 create a vertex $(s)$\;

\For{$v \in \GAT \SM \PP$}
{

Let $R_v = |sv|_\GAT$\;   create a vertex $(v,|R_v|)$\;
   add edge between ($s)$ and $(v,|R_v|)$ with weight $|sv|_\GAT$\;
   append $R_v$ in  $\DEP(v)$\;
   \For{$(v,w) \in \GAT$ and $w \notin \PP$}
      {
      add the path $R_v+(v,w)$ in min-heap  $\HH$ \;
      }
}

\While{min-heap is non-empty}
{
  remove $R$ from the top of min-heap\;

Let us assume that $R$ ends at $v$ and the last edge of $R$ is $(u,v)$\;
Also assume that $R_u = R \SM (u,v)$ is a candidate departing path to $u$\;
  Let  $Q$ be the last candidate departing path added by us  in   $\DEP(v)$\;
  \If{$d_{\GAT}(s,\DP(R))<d_\GAT(s,\DP(Q))$}
  {
    create a vertex $(v,|R|)$\;
    add edge between $(u,|R_u|)$ and $(v,|R|)$ with weight $wt_\GAT(u,v)$ \;

    append the path $R$ in  $\DEP(v)$\;
    \For{$(v,w) \in \GAT$ and $w \notin \PP$}
    {
    add the departing path $R+(v,w)$  in min heap $\HH$\;
    }
 }

}

\end{algorithm}

\subsection{Correctness and running time of the algorithm storing candidate departing paths}
\label{subsection:correct}

We claim that, the time taken to construct the $\DEP()$ data-structure at a node of the binary tree with graph $\GAT$ is  $O(m_\GAT\sqrt n)$. Moreover, the size of the data-structure is $O(n_\GAT \sqrt n)$. In this section, we prove that  our algorithm stores correct lengths of all candidate departing paths.\begin{lemma}
	\label{lem:propertytrue}
Let $R$ be a candidate departing path to $v$ where $v \in \GAT \SM \PP$. Let $yz$ be the maximal subpath of $\PP$ such that $R$ is the candidate departing path for edges in $yz$. Then $(v,|R|) \in G$ and satisfies  \Cref{prop1}.
\end{lemma}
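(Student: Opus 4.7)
I will prove the lemma by strong induction on the order in which candidate departing paths are extracted from the min-heap $\HH$, an order that is non-decreasing in length with ties broken by proximity of the detour point to $s$ --- the same priority used by the algorithm. The induction variable is the ``timestamp'' at which the algorithm commits to a candidate departing path (either via the initialization loop or via a successful heap-pop).

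For the base case, I take $R$ to be the shortest $s$-to-$v$ path $(sv)_\GAT$. Setting $p = \LCA(v, r)$, \Cref{lem:departingproperty} implies that $sv$ is the candidate departing path for every edge on $pr$, so $yz \subseteq pr$. The initialization loop of \Cref{algo:cdpalgo} directly creates the vertex $(v, |sv|_\GAT)$ and the single in-edge $(s) \to (v, |sv|_\GAT)$ of weight $|sv|_\GAT$; hence $(v, |sv|_\GAT) \in G$ and the only path from $(s)$ to this vertex has length $|sv|_\GAT$, as required.

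For the inductive step, consider a candidate departing path $R$ to $v$ with $|R| > |sv|_\GAT$ and write $R = R' + (u, v)$, where $(u, v)$ is its final edge. The key sub-claim is that $R'$ is itself the candidate departing path to $u$ for every $e \in yz$: $R'$ is automatically a departing path to $u$ avoiding $e$ (removing the last edge preserves both the avoidance of $e$ and the non-use of $\PP$ after divergence), and if some strictly shorter departing path $R''$ to $u$ also avoided $e$, then $R'' + (u, v)$ would be a strictly shorter departing path to $v$ avoiding $e$ --- note $(u,v) \neq e$ since $v \notin \PP$ --- contradicting the optimality of $R$. Because $|R'| < |R|$, the inductive hypothesis supplies $(u, |R'|) \in G$ with $|(s)(u, |R'|)|_G = |R'|$, and when the algorithm extracted $R'$ its neighbor loop inserted $R = R' + (u, v)$ into $\HH$ (using $v \notin \PP$). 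When $R$ is later extracted, the guard $d_\GAT(s, \DP(R)) < d_\GAT(s, \DP(Q))$ holds, because the most recent $Q$ in $\DEP(v)$ is strictly shorter than $R$ and \Cref{lem:reldeparting3} forces $\DP(R)$ strictly above $\DP(Q)$ on $\PP$. The algorithm then creates the vertex $(v, |R|)$ together with the unique in-edge from $(u, |R'|)$ of weight $wt_\GAT(u, v)$; any path in $G$ from $(s)$ to $(v, |R|)$ must traverse this in-edge, so $|(s)(v, |R|)|_G = |R'| + wt_\GAT(u, v) = |R|$.

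The main obstacle I foresee is the boundary case where $u \in \PP$, i.e.\ $R$'s detour consists of the single edge $(u, v)$. Then $R' = (su)_\PP$ lies entirely on the primary path and is not a candidate departing path tracked by $G$, so the inductive step does not bootstrap directly. I expect this case either to be ruled out by showing that such an $R$ must coincide with $sv$ (forcing $u = \LCA(v, r)$, which is handled by the base case), or to require arguing that the algorithm still produces $(v, |R|)$ via an alternative predecessor $(u', |R^*|)$ with $u' \notin \PP$ realizing the same length and detour point. Reconciling this edge case, together with verifying that every newly created $(v, |R|)$ has a unique in-edge so that no shorter $G$-path sneaks in, is where the careful bookkeeping of the full proof will concentrate.
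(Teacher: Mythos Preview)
Your overall strategy --- strong induction with the shortest path $sv$ as base case and the one-edge-shorter prefix $R' = R\setminus(u,v)$ as the inductive hook --- is exactly the paper's. The only structural difference is that where you argue in one line that the guard $d_\GAT(s,\DP(R)) < d_\GAT(s,\DP(Q))$ succeeds when $R$ is popped, the paper instead performs a four-way case split on whether the algorithm has already inserted some other $(v,|R'|)$ into $\DEP(v)$ and on where $\DP(R')$ sits relative to $\DP(R)$. Your direct argument is fine once you observe that any intervening $Q$ is itself a departing path of length $\le |R|$: either $|Q|<|R|$, whence $\DP(Q)$ must lie strictly below $\DP(R)$ (else $Q$ would undercut $R$ on its own interval), or $|Q|=|R|$ with $\DP(Q)$ above $\DP(R)$, whence $Q$ rather than $R$ is the canonical candidate departing path by the tie-breaking rule, contradicting the hypothesis on $R$. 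The paper's case split establishes the same facts more explicitly.

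On the boundary case $u\in\PP$: you are right to flag it, and the paper does not address it either --- it simply asserts ``since $R$ is a candidate departing path, even $R\setminus(u,v)$ is a candidate departing path'' and invokes the induction hypothesis, which already presupposes $u\notin\PP$. When $(u,v)$ is an \emph{unweighted} edge your first suggested resolution goes through: a short distance computation shows $u\in\PP$ forces $u=\LCA(v,r)$ and hence $R=sv$, the base case. The delicate situation is when $(u,v)$ is one of the \emph{weighted} edges added to $\GAT$; in a right-child node these all emanate from $u=s\in\PP$, and then $R=(s,v)$ can in principle be the candidate departing path for some unweighted $e\in\PP$ while having $|R|>|sv|_\GAT$ and no predecessor off $\PP$. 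Neither your proposal nor the paper's proof handles this possibility.
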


\begin{proof}
	We will prove the above lemma using induction on the weighted distance of a vertex from $(s)$ in $G$. During initialization, for each $v \in \GG \SM \PP$, we add a vertex $(v,|sv|_\GAT) \in G$. Also, the weight of the edge from $(s)$ to $(v,|sv|_\GAT)$ is $|sv|_\GAT$.   We claim that the statement of the lemma is true for the smallest candidate departing path to $v$. Indeed, using  \Cref{lem:departingproperty}, $(sv)_\GAT$ is the a replacement path for the edges in subpath $pr$ on $\PP$ where $p = \LCA_{\GAT}(v,r)$. Also,  $(sv)_\GAT$ is the smallest replacement path because it is the shortest path from $s$ to $v$ in $\GAT$.     Thus, the base case is true for all $v \in \GAT \SM \PP$.

Let us now assume that the statement of the lemma is true for all candidate departing paths to $v$ with length $< |R|$.  Let the second last vertex of $R$  be $u$.
Since $R$ is a candidate departing path, even $R \SM(u,v)$ is a candidate departing path. Since $R[s,u]$ has length less than $R$, by induction hypothesis, there is a vertex $(u,|R \SM(u,v)|)$ in $G$. Also there is at least one  replacement path in $\DEP(v)$  of weight less than $|R|$ -- as we have added $(sv)_\GAT$ in $\DEP(v)$. Let $Q$ be a candidate departing path of  largest weight less  than the weight of $R$. Let  us also assume that $Q$ avoids edge on subpath $zz' \in \PP$. Thus, using  \Cref{lem:reldeparting2}, $\DP(Q) = z$.  Since $|Q| < |R|$, using  \Cref{lem:reldeparting3}, $\DP(R)$ lies above $\DP(Q)$ on $\PP$.  Using the induction hypothesis,  there is a vertex $(v,|Q|)$ in $G$.

We will now show that our algorithm will add $(v,|R|)$ in $G$. There are four cases:

\begin{enumerate}
\item Our algorithm does not add any vertex for $v$ after $(v,|Q|)$

But our algorithm does process the vertex $(u,|R \SM(u,v)|)$. Thus, it will check each neighbour of $u$. When it checks the neighbor  $v$, it will  add the  departing path $R$ in the heap $\HH$. Thus, we will add the vertex $(v,|R|)$ in $G$ while processing $R$, leading to a contradiction.

\item Our algorithm adds a vertex $(v,|R'|)$ where $\DP(R)$ lies above   $\DP(R')$ on $\PP$

We claim that the weight of $R'$ cannot be less than the weight of $R$ as then  $R$ is not the candidate departing path for all the edges in $\DP(R')z$ subpath, contradicting the statement of the lemma. So let us assume that $|R| = |R'|$. But then $\DP(R)$ lies above $\DP(R')$ on $\PP$. Thus, the min-heap will give preference to the replacement path $R$ first, and our algorithm will make the vertex $(v,|R|)$. Again a contradiction.

\item Our algorithm adds a vertex $(v,|R'|)$ where $\DP(R)$ lies below  $\DP(R')$ on $\PP$

 Again, we claim that the weight of $R'$ cannot be less than the weight of $R$ as then it $R$ is not the replacement path for all edges in $yz$ subpath, contradicting the statement of the lemma. So let us assume that $|R| = |R'|$. But $\DP(R') $  is closer to $s$ than $\DP(R)$. Thus, $R'$ should be the  candidate departing path avoiding edges of $yz$. This contradicts our assumption that $R$ is the candidate departing path for all edges in $yz$.

\item Our algorithm adds a vertex $(v,|R'|)$ where $\DP(R) = \DP(R')$

$|R'|$ cannot be less than $|R|$ as otherwise our algorithm will give preference to path $R$. But, if  $|R| = |R'|$, then there is a vertex $(v, |R'|)= (v,|R|)$ in $G$. 

\end{enumerate}

So, we  add the  node $(v,|R|)$ in the graph $G$. At that moment, we also adds an edge from $(u,R\SM(u,v))$ to $(v,|R|)$ in $G$ with weight $wt_{\GAT}(u,v)$. Using induction, $|(s)(u,R\SM(u,v))|_G = |R\SM(u,v)|$. Thus, $|(s)(v,|R|)|_G = |R\SM(u,v)|\ +\ wt_\GAT(u,v) = |R|$. This completes our proof.

\end{proof}

 Let's determine the running time of our algorithm. Using \Cref{lem:numberofdeparted}, for each $v \in \GAT$, we make $O(\sqrt n)$ entries in $\DEP(v)$. In other words, we make $O(\sqrt n )$ nodes of type $(v ,\dot)$ in $G$.   Whenever we add a node $(v,|R|)$  in $G$, we see all the edges of $v$.   This implies that the total time taken to process all the vertices of $v$ in $G$ is $O( \sqrt n \deg_\GAT(v))$. Summing it over all the vertices gives us the bound of $O(m_\GAT\sqrt n)$.  Using a similar calculation, the total size of our data-structure for  $\GAT$ is $O(n_\GAT \sqrt n)$. Thus, we claim the following lemma:

\begin{lemma}
\label{lem:depconstruction}
The time taken to construct the $\DEP()$ data-structure at a node of the binary tree with graph $\GAT$ is  $O(m_\GAT\sqrt n)$. Moreover, the size of the data-structure is $O(n_\GAT \sqrt n)$.
\end{lemma}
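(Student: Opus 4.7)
My plan is to prove the lemma by charging all the work and storage to the vertices of the auxiliary graph $G$, then using \Cref{lem:numberofdeparted} to bound how many such vertices can ever be created for a fixed endpoint $v$. The running-time analysis naturally splits into (a) the per-vertex cost of scanning neighbors to populate the heap, and (b) the logarithmic overhead incurred by heap operations. Both turn out to be dominated by $\TL(m_\GAT \sqrt n)$, which will be the key calculation.

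First I would argue that for every $v \in \GAT \SM \PP$, the total number of nodes of the form $(v,\cdot)$ that our algorithm ever adds to $G$ equals the number of distinct candidate departing paths to $v$ that we append to $\DEP(v)$. By \Cref{lem:propertytrue}, each candidate departing path is represented by exactly one such node, and by \Cref{lem:numberofdeparted} there are at most $O(\sqrt n)$ of them. Thus the total number of nodes in $G$ is at most $O(n_\GAT \sqrt n)$, which immediately gives the space bound once one observes that each node stores only $O(1)$ information (two endpoints, a length, $\DP(R)$, and the last edge) and each entry of $\DEP(v)$ also has constant size.

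For the running time, whenever our algorithm creates a node $(v, |R|)$, it performs a single scan over the neighbors of $v$ in $\GAT$, pushing one extended path into the min-heap $\HH$ for each neighbor outside $\PP$. Thus the work attributable to creating $(v, |R|)$ is $O(\deg_\GAT(v))$ scanning plus $O(\deg_\GAT(v) \log n)$ heap-insertion cost. Summing over the at most $O(\sqrt n)$ nodes per vertex and over all vertices,
\[
\sum_{v \in \GAT} O(\sqrt n) \cdot O(\deg_\GAT(v) \log n) \;=\; O\!\left(\sqrt n \log n \sum_{v \in \GAT}\deg_\GAT(v)\right) \;=\; \TL(m_\GAT \sqrt n).
\]
It remains to bound the work spent on heap entries that never lead to new nodes. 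The critical observation is that every heap entry was inserted during the processing of some created node in $G$, so the total number of extract-min operations is at most $\sum_v O(\sqrt n) \cdot \deg_\GAT(v) = O(m_\GAT \sqrt n)$, and each extract-min together with the $\DEP(v)$ check (comparing $d_\GAT(s,\DP(R))$ against $d_\GAT(s,\DP(Q))$ for the last appended $Q$, which is $O(1)$ after a one-time $O(n_\GAT)$ precomputation of distances along $\PP$) costs $O(\log n)$. This absorbs into the $\TL(m_\GAT \sqrt n)$ bound.

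The main obstacle I anticipate is making precise the claim that \emph{no} path gets inserted into $\HH$ except as a direct consequence of a node being created in $G$. Because the algorithm only pushes $R + (v,w)$ after successfully appending $R$ to $\DEP(v)$, this is essentially built into the pseudocode, but I would spell it out explicitly so that the charging scheme is airtight. Once this invariant is fixed, both the $O(m_\GAT \sqrt n)$ time and $O(n_\GAT \sqrt n)$ space bounds follow from the two-line aggregate calculation above, yielding the lemma.
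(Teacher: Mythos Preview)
Your argument is correct and follows essentially the same route as the paper: bound the number of nodes $(v,\cdot)$ created in $G$ by $O(\sqrt n)$ per vertex via \Cref{lem:numberofdeparted}, charge the neighbor scan at each node creation with $\deg_\GAT(v)$, and sum to $O(m_\GAT\sqrt n)$; the space bound then falls out immediately. You are in fact more careful than the paper, which suppresses the heap $\log$ factor and does not spell out the charging for heap entries that fail the $\DP$ test; your observation that every heap insertion is paid for by some created node is exactly the missing sentence, so there is nothing to fix.
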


\subsection{Querying for a candidate departing path}
\label{query}
In this section, we describe how to find a candidate departing path using our data-structure $\DEP()$. Let $t \in \GAT \SM \PP$  and $e \in \PP$ be an edge on $st$ path. Let  $st \DIA e$ be   a candidate departing path, then we can find it using  algorithm given in \Cref{app:queryCDP}.

In this algorithm, we perform a binary search in $\DEP(t)$ to find two consecutive paths $R$ and $Q$ such that $e$ lies in the interval [$\DEP(R), \DEP(Q)]$  of $\PP$. Using  \Cref{lem:reldeparting3} and \Cref{lem:reldeparting2},  $R$ is the candidate departing path avoiding $e$.
The pseudocode of the query algorithm to find the Candidate Departing path here.
\label{app:queryCDP}
\begin{algorithm}
\caption{$\QD(s,t,e)$}


 $R \xleftarrow{}$ binary search in the array of $\DEP(t)$ to find a two consecutive path $Q$ and $R$ such that $e$ lies in interval  $[\DP(R),\DP(Q)]$ of $\PP$ \;
return{ $|R|$}
\label{algo:queryd_1}
\end{algorithm}

\begin{algorithm}[H]
\caption{$\QU(s,t,e,\TT)$}

 $mindist \xleftarrow{} \infty $\;

  \tcc{ \Cref{sec:gmgn} }

\If{ $e \in \GM,t \in \GN $}
{
    \If{ $e \in \PP$}
    {
     \tcc{if $st \DIA e$ happens to be departing}
     $mindist \leftarrow \QD(s,t,e)$ using $\DEP()$ data-structure at $\TT$\;
     \tcc{if $st \DIA e$ happens to be jumping}

     $mindist \xleftarrow{} \min\{mindist, |sr \diamond e|+|rt|\}$
    }
    \Else{
    $mindist \leftarrow  |st|$
    }
}
\tcc{ \Cref{subsec:MM}}

\If{ $e \in \GM,t \in \GM $}
{
\tcc{ \Cref{sec:gmgmp}}
\If{ $e \in \PP $}
{
      \tcc{if $st \DIA e$ happens to be departing}
      $mindist \leftarrow \QD(s,t,e)$ using $\DEP()$ data-structure at $\TT$\;
      \tcc{if $st \DIA e$ happens to be jumping}
      $mindist \xleftarrow{} \min (mindist, |sr \DIA e|+|rt| ) $

   }
    \tcc{ \Cref{sec:gmgnnp}}
    \If{ $e \in \GM \setminus \PP $}
    {
      $mindist \xleftarrow{}  \QU(s,t,e,\text{left child of $\TT$})$
   }

}

\tcc{ \Cref{sec:gngm}}
\If{ $e \in \GN, t \in \GM $}
{
    $mindist \leftarrow  |st|$\;
}

\tcc{ \Cref{subsec:NN}}
\If{$ e \in \GN,t \in \GN,  $}
{
   $mindist \leftarrow \QU(s,t,e,\text{right child of $\TT$}$)\;

}
\tcc{ \Cref{sec:middle}}
\If{one endpoint of $e$ is in $\GM$ and other in $\GN$}
{
$mindist \leftarrow  |st|$\;
}

return $mindist$\;
\label{alg:query}
\end{algorithm}

 \section{Construction time, size and query time of the SDO(1) } 
 In this section, we show that the construction time of our algorithm is $\TL(m \sqrt n)$. We also bound the size of the data-structure of our algorithm by $\TL(n\sqrt n)$. We also design a query algorithm with a query time $\TL(1)$. This proves the main result of the paper. 
 
 At the root of $\TT$,  except for the recursions,  we claim that constructing all other data-structures  take $O(m \sqrt{n})$ time. This is beacuse, the construction time is dominated by the time to construct $\DEP()$, which using  \Cref{lem:depconstruction}, is $O(m\sqrt n)$.    At the second level of the tree $\TT$, we have two nodes. In the left child, we have the graph $\GM \cup X$. This graph  has $m_{\GM} + n_{\GM}$ edges and $n_{\GM}$ vertices. Again applying  \Cref{lem:depconstruction}, the time taken to construct all the data-structures in the left child of root is $(m_{\GM}+n_{\GM})\sqrt n$. In the right child of the root, we have the graph $\GN \cup Y$. This graph has $m_{\GN} + n_{\GN}$ edges and $n_{\GN}+1$ vertices. The $+1$ is for the new root in $\GN$. Again applying  \Cref{lem:depconstruction}, the time taken to construct all the data-structures in the right child of root is $(m_{\GN}+n_{\GN})\sqrt n$. Thus, the total time taken at the second level of $\TT$ is    = $(m_{\GM}+n_{\GM}) \sqrt{n} + (m_{\GN}+n_{\GN}) \sqrt{n}$. Since $m_{\GM}+m_{\GN} \le m$ and $n_{\GM} + n_{\GN} = n+1$, the total time taken is  $\le (m+n+1) \sqrt n$. Note that $n_{\GM} + n_{\GN} = n+1$ because $r$ is shared both by  $\GM$\ and $\GN$.  Since, the number of nodes in $\TT$ is $O(n)$,  we claim that the number of vertices shared by sibling graphs at any level of $\TT$  is $O(n)$. Similar to the second level, we claim that the time taken at  level $\ell$ is $\TL((m +n +$ \#nodes shared at level $\ell$$)\sqrt  n) = \TL((m+n)\sqrt n)$.  We can assume that our graph $\GG$ is connected as  we need not even process a component that is not reachable from our source $s$. Thus, the previous running time bound is equal to $\TL(m \sqrt n)$. Since the height of the tree is $\TL(1)$, the total time taken to construct our data-structure is $\TL(m\sqrt n)$.
  Using the same argument, we can bound the size of the data-structure of our algorithm by $\TL(n\sqrt n)$.

  \subsection{The Query Algorithm}
  
  In this section, we will design our query algorithm that will take $s,t,e$ as its parameter. Additionally, it also takes the root of the tree $\TT$ as a parameter which contains data structures of the main graph $\GG$.  The algorithm then basically goes over all the possible cases described in   \Cref{sec:detail} (Please see Algorithm 3). Also, the algorithm compares that output with the best candidate departing path given by Algorithm 2 and returns the minimum among them.

  The reader can see that the time taken by the  algorithm (excluding recursion) is $\TL(1)$. Since, at each step in this algorithm, we either go to the left child of a node in the tree $\TT$ or to the right child, the number of recursive steps in this algorithm is $\TL(1)$. This implies that the running time of the algorithm is $\TL(1)$. Thus, we have proven the main theorem of the paper.


\bibliographystyle{plainurl}
\bibliography{paper}
\appendix





\section{Proof of Lemma \ref{lem:guptaadapted}}
\label{adapted}

In lemma \ref{lem:guptaadapted}, we have a graph $\GAT$ at some internal node of $\TT$. The source of $\GAT$ is $s$, and the primary path is $\PP =sr$. A weighted edge in $\GAT$ represents a path in the parent of $\GAT$, say $\overline{\GG}$. First of all, we recursively expand all weighted edges in $\GAT$. By expanding, we mean that we will replace a weighted edge  $e \in \GAT$ with its corresponding path in $\overline{\GG}$.  Similarly, a weighted edge $e'$ in $\overline{\GG}$ represents a path in its parent. So, we will again recursively replace $e'$ with a path. The reader can see that this process will lead us to a graph in which each edge $e$ of $\GAT$ is replaced by a path that contains only unweighted edges. However, this transformed graph may have $n$ vertices. Henceforth, we will assume that we are working with this transformed unweighted version of $\GAT$.

Even in the transformed version of $\GAT$, the shortest $st$ path remains the same -- since we have just expanded weighted edges. Since we now have an unweighted graph, we will follow the proof of \cite{GuptaS18}. We first define some terms:
\begin{itemize}

    \item Let $\RR$ be the set of all replacement paths that avoid edges on $sp$ path and avoid $p$ too. At the end of the proof, we will show that $|\RR| = O(\sqrt n)$.
    \item If $R \in \RR$ is a replacement path from $s$ to $t$ avoiding some edge, then we define the sets $(<R)$ and $(>R)$ as the set of all  replacement paths from $s$ to $t$ avoiding edges on the path $st$ which has length less than the length of $R$ and greater than the length of $R$ respectively.
    \item  \Det $(R)$ is the sub-path of $R$ from the vertex it leaves $st$ to the vertex where it merges back to $st$ path.
Thus, $\Det(R) = R \SM st$.
    \item \UNIQUE $(R)$ is the prefix of $R$ which does not intersect with any detours in $\bigcup_{R' \in (>R)}$ \Det $(R')$.
\end{itemize}

We now state the following lemma from \cite{GuptaS18} that will be useful later:

\begin{lemma}(Lemma 9 of \cite{GuptaS18})
\label{lem:lessthanR}
 If $R\in \RR$ is a replacement path from $s$ to $t$ avoiding $e$ such that $|R|=|st|+l$ where $l \geq 0$, then the size of the set $(<R)$ is $\leq l$.

\end{lemma}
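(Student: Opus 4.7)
The plan is to establish $|(<R)| \le l$ by exhibiting an injection $\Phi : (<R) \to \{0, 1, \dots, l-1\}$ given by $\Phi(R') = |R'| - |st|$; the bound then follows from the codomain's size. Well-definedness is immediate: for any $R' \in (<R)$ we have $|st| \le |R'| < |R| = |st| + l$, so $|R'| - |st| \in \{0, 1, \dots, l-1\}$. So the entire content of the lemma reduces to proving that $\Phi$ is injective, i.e., that distinct preferred replacement paths in $(<R)$ have distinct lengths.

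The heart of the argument is this injectivity claim: if $R_1, R_2 \in (<R)$ are preferred replacement paths for distinct edges $e_1 \ne e_2$ on $st$ with $|R_1| = |R_2| = L$, I would show that $R_1 = R_2$ as paths. The approach is to apply \Cref{def:replacementpath} symmetrically: each $R_i$ is selected among $L$-length paths avoiding $e_i$ by the three-level priority (shortest length, divergence-point closest to $s$, then lex-smallest). In the favourable case where $R_1$ additionally avoids $e_2$ and $R_2$ additionally avoids $e_1$, each is a valid candidate for the other's role; comparing them by this common set of criteria forces first equal divergence points (otherwise the one with earlier divergence would have displaced the other) and then equal lex-order, giving $R_1 = R_2$.

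The main obstacle, as I see it, is the unfavourable case in which, say, $R_1$ uses the edge $e_2$. Here the plan is a splicing argument: identify a vertex $u$ at which $R_1$ and $R_2$ meet after both paths have traversed their respective avoided edges, and concatenate the prefix of $R_2$ up to $u$ with the suffix of $R_1$ from $u$ onwards. Provided $u$ is chosen so that the splice avoids $e_1$ (which is arranged because $R_2$ does not use $e_1$ on the prefix and $R_1$ beyond $u$ does not either), the resulting $s$-to-$t$ walk has length at most $L$ in the (transformed) unweighted graph. If it is strictly shorter than $L$, this contradicts the length-minimality of $R_1$ among paths avoiding $e_1$; if it equals $L$, a finer comparison of divergence points and lex-order (again via \Cref{def:replacementpath}) yields the contradiction. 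The trickiest step will be guaranteeing that such a meeting vertex $u$ exists when the detours of $R_1$ and $R_2$ interact in pathological ways, and that the spliced walk, though possibly not simple, can be trimmed to a simple path that still provides the required contradiction; unweightedness of the underlying graph is what makes the splicing lengths behave additively and makes the contradiction go through.
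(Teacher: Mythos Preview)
The paper does not prove this lemma; it is quoted from \cite{GuptaS18}, so there is no in-paper argument to compare against. I evaluate your proposal on its own merits.

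Your reduction to injectivity of $\Phi$ is the right framework, and your treatment of the favourable case is correct. The gap is the unfavourable case: the splicing argument you sketch cannot be completed, because the underlying injectivity claim is \emph{false} for arbitrary replacement paths. Take the $st$ path $s=v_0,v_1,\ldots,v_6=t$ and attach pairwise disjoint one-vertex detours $v_0a_1v_1$, $v_2a_2v_3$, $v_4a_3v_5$. The preferred replacement paths for $(v_0,v_1)$, $(v_2,v_3)$, $(v_4,v_5)$ are three \emph{distinct} paths, all of length $|st|+1$, and each passes through the other two faulted edges. Any splice you form either fails to avoid the required edge (e.g.\ the prefix of $R_2$ already contains $e_1=(v_0,v_1)$) or is no shorter than the path you are trying to beat, so there is no contradiction to extract. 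If one further adds a long $p$-avoiding detour so as to produce some $R\in\RR$ with small $l$, the literal reading of $(<R)$ in the paper would then give $|(<R)|\ge 3>l$.

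The hypothesis you are not using is exactly what rescues the lemma: the paths that matter lie in $\RR$, i.e.\ they avoid the fixed vertex $p$, so every such detour ends strictly below $p$. Hence if $e_1$ lies above $e_2$ on $sp$ and $R_1\in\RR$ avoids $e_1$, its detour (starting above $e_1$ and ending below $p$) already skips over $e_2$, so $R_1$ avoids $e_2$ as well. Then $R_1$ is a legal length-$L$ competitor for $e_2$; by \Cref{def:replacementpath} the divergence point of $R_2$ must be at least as close to $s$ as that of $R_1$, hence above $e_1$, so $R_2$ also avoids $e_1$. You are now in your favourable case and conclude $R_1=R_2$. In short, once the $\RR$ restriction is invoked the unfavourable case never arises and no splicing is needed; read $(<R)$ as $(<R)\cap\RR$ and your argument is complete.
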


Now, let us go through the $st$ replacement paths in $\RR$ in decreasing order of lengths. We can observe that the number of replacement paths with   $\UNIQUE \geq \sqrt{n}$ will be  $\le \sqrt{n}$ as the vertices on the detours must be mutually disjoint in \UNIQUE. Now, let $R$ be the first such replacement path such that \UNIQUE$(R) < \sqrt{n}$. We show that number of replacement paths in the set $(<R)$ will be $\OO(\sqrt{n})$.  Now, we have the following cases.

\begin{itemize}
    \item \textbf{ \Det$(R)$ does not intersect with detour of any path in $(>R)$}

Let the detour of $R$ starts at $a$ and ends at $b$. Thus $\UNIQUE(R) =ab$ and $|ab|  \le \sqrt n$. Using Lemma \ref{lem:reldeparting3}, the detour of every  replacement path avoiding edges between $a$ and $p$ from the set $(<R)$ must go through $a$ -- since the detour of these paths start below $a$. Now, $ |R \setminus sa |=|ab|+|bt| \leq |ab|+|at| < |at|+\sqrt{n}  $. So using lemma \ref{lem:lessthanR}, the number of paths in the set $ \{R' \setminus sa | R' \in (<R) \}$ is $\le \sqrt{n}$. So, number of replacement paths with length less than $R$ will be $\leq \sqrt{n}$.

    \item \textbf{ \Det$(R)$ intersects with detour of a path in $(>R)$}
     \begin{figure}[hpt!]
        \centering
        \begin{subfigure}[b]{0.3\textwidth}
                \centering
                \begin{tikzpicture}[scale=2.5]
                
                \definecolor{dgreen}{rgb}{0.0, 0.5, 0.0}
                \begin{scope}[xshift=0cm]
                \coordinate (s) at (0,2);
                \coordinate (t) at (0,0);
                \coordinate (ts) at (0,0.4);
                \coordinate (b1) at (0,.2);
                
                \coordinate (a1) at (0,1.5);
                \coordinate (a) at (0,1);
                \coordinate (v) at (0,0.7);
                \coordinate (v1) at (0,1.2);
                \coordinate (c) at (-0.585,0.7);
                
                \draw[thick](s)--(t);
                \node[above] at (s){$s$};
                \node[below] at (t){$t$};
                \node[right] at (a1){$a$};
                \node[right] at (b1){$b$};

                \draw[blue,thick] (a1) to[out=180,in=180,distance=.8cm] node[pos=0.3,left]
                {\scriptsize  $R$}(b1);
                
                \node at (v1){$\times$};
                \node[right] at (v1){$e$};
                
                \node at (ts){$\times$};
                \node[right] at (ts){$p$};
                \end{scope}
                
                \end{tikzpicture}
                
                \caption{ $\Det(R)$ does not intersect with detour of any path in $(>R)$ }
                \label{fig:singlesecondcase}
        \end{subfigure}
        \hspace{3 cm}
        \begin{subfigure}[b]{0.3\textwidth}
                \begin{tikzpicture}[scale=2.5]
                
                \definecolor{dgreen}{rgb}{0.0, 0.5, 0.0}
                \begin{scope}[xshift=0cm]
                \coordinate (s) at (0,2);
                \coordinate (t) at (0,0);
                \coordinate (ts) at (0,0.4);
                \coordinate (b1) at (0,.2);
                
                \coordinate (a1) at (0,1.5);
                \coordinate (a) at (0,1);
                \coordinate (v) at (0,0.7);
                \coordinate (v1) at (0,1.2);
                \coordinate (c) at (-0.585,0.7);
                
                \draw[thick](s)--(t);
                \node[above] at (s){$s$};
                \node[below] at (t){$t$};
                \node[right] at (a1){$a'$};
                \node[right] at (a){$a$};
                \node[right] at (b1){$b'$};
                \node[left] at (c){$c$};

                \draw[blue,thick] (a1) to[out=180,in=180,distance=.8cm] node[pos=0.3,left]
                {\scriptsize  $R'$}  (b1);
                
                \draw[red,thick] (a) to[out=180,in=80]
                node[pos=0.4,above]
                {\scriptsize  $R$}  (c);
                
                \node at (v1){$\times$};
                \node[right] at (v1){$e'$};
                
                \node at (v){$\times$};
                \node[right] at (v){$e$};
                
                \node at (ts){$\times$};
                \node[right] at (ts){$p$};
                \end{scope}
                
                \end{tikzpicture}
                
                \caption{ $\Det(R)$  intersects with detour of a path in $(>R)$ }
                \label{fig:singlesecondcase}
        \end{subfigure}
        \caption{Two cases according to \Det$(R)$ intersects or not with detour of a path in $(>R)$}

\end{figure}
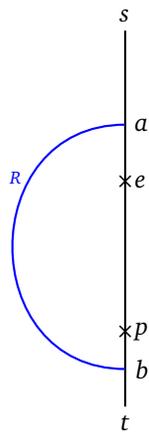
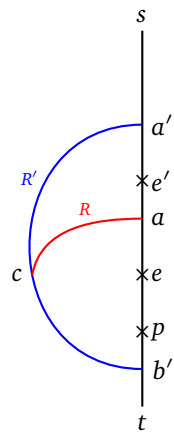

     Let, the first path in $(>R)$ which intersects $R$ be $R'$. Let the detour of $R$ starts at $a$ and ends at $b$. Let the detour of $R'$ starts at $a'$ and ends at $b'$. Let $R$ and $R'$ intersect at point $c$. This implies that $\UNIQUE(R)=ac$ and $|ac| \le \sqrt n$. (See Figure \ref{fig:singlesecondcase})

\begin{align*}
|sa'|+|a'c|+|cb'|+|b't| &\leq  |sa'|+|a'c|+|ca|+|at| \\
|cb'|+|b't| &\leq  |ca| +|at| \\
|ac|+|cb'|+|b't| &\leq 2|ca|+|at|
\end{align*}

Now, the left hand side of the above inequality represents the path  $R \setminus sa$. So $|R \SM sa| = |ac|+|cb'|+|b't| $. Since $|ca| \leq \sqrt{n}$, $|R \setminus sa| \leq 2 \sqrt{n}+ |at|$. Again, using Lemma  \ref{lem:lessthanR}, the cardinality of the set of replacement paths $ \{R' \setminus sa | R' \in (<R) \}$ will be $\le 2 \sqrt{n}$.

\end{itemize}
Thus, we have shown that the number of replacement paths in $\RR$ with  $\UNIQUE \ge \sqrt n$ is $O(\sqrt n)$. Moreover, once we find a path in $\RR$ with $\UNIQUE < \sqrt n$, then there are $O(\sqrt n)$ path in $R$ left to be processed. Thus, the number of replacement paths in $\RR$ is $O(\sqrt n)$.
 This completes the proof of the lemma.

\end{document}